\newtheorem{lemma}{Lemma}
\newtheorem{proposition}{Proposition}
\newtheorem{corollary}{Corollary}
\newtheorem{definition}{Definition}
\newtheorem{assumption}{Assumption}
\newtheorem{example}{Example}
\newtheorem{remark}{Remark}
\newcommand{\trans}{{\mathrm{T}}}
\newcommand{\pv}{{\mathrm{pv}}}
\newcommand{\tv}{{\mathrm{tv}}}
\newcommand{\E}[1]{\mathbb{E}\left[ #1 \right]}
\newcommand{\U}[1]{\mathbb{U}\left[ #1 \right]}
\newcommand{\V}[1]{\mathbb{V}\left[ #1 \right]}
\newcommand{\VaR}{\operatorname{VaR}}
\newcommand{\ess}{\operatorname{ess}}
\newcommand{\sign}{\operatorname{sign}}
\newcommand{\ES}{\operatorname{ES}}
\renewcommand{\P}{\mathbb{P}}
\newcommand{\R}{\mathbb{R}}
\newcommand{\Var}{\operatorname{Var}}
\newcommand{\Cov}{\operatorname{Cov}}
\newcommand{\SCR}{\operatorname{SCR}}
\newcommand{\RM}{\operatorname{RM}}
\newcommand{\BE}{\operatorname{BE}}
\newcommand{\TP}{\operatorname{TP}}
\newcommand{\CoC}{\operatorname{CoC}}
\renewcommand{\phi}{\varphi}
\newcommand{\calF}{\mathcal{F}}
\newcommand{\calG}{\mathcal{G}}
\newcommand{\calH}{\mathcal{H}}
\newcommand{\filF}{\mathbb{F}}
\newcommand{\filG}{\mathbb{G}}
\newcommand{\calA}{\mathcal{A}}
\newcommand{\calL}{\mathcal{L}}
\newcommand{\calR}{\mathcal{R}}
\title[Multi-period cost-of-capital valuation]{Insurance valuation: a computable multi-period cost-of-capital approach}
\author{Hampus Engsner, Mathias Lindholm, Filip Lindskog}
\address{Department of Mathematics\\ Stockholm University\\ SE-106 91 Stockholm\\ Sweden}
\date{July 14, 2016}							% Activate to display a given date or no date
\begin{document}
\maketitle

\begin{abstract}
We present an approach to market-consistent multi-period valuation of insurance liability cash flows based on a two-stage valuation procedure. 
First, a portfolio of traded financial instrument aimed at replicating the liability cash flow is fixed. 
Then the residual cash flow is managed by repeated one-period replication using only cash funds.
The latter part takes capital requirements and costs into account, as well as limited liability and risk averseness of capital providers. The cost-of-capital margin is the value of the residual cash flow. We set up a general framework for the cost-of-capital margin and relate it to dynamic risk measurement. 
Moreover, we present explicit formulas and properties of the cost-of-capital margin under further assumptions on the model for the liability cash flow and on the conditional risk measures and utility functions. Finally, we highlight computational aspects of the cost-of-capital margin, and related quantities, in terms of an example from life insurance.

\smallskip
\noindent
{\bf Keywords:} valuation of insurance liabilities, multi-period valuation, market-consistent valuation, cost of capital, risk margin, dynamic risk measurement
\end{abstract}

%\keywords{valuation of insurance liabilities, multi-period valuation, market-consistent valuation, cost of capital, risk margin, dynamic risk measurement}

\section{Introduction}

The current solvency regulatory framework Solvency II emphasizes market-consistent valuation of liabilities; it is explicitly stated that liabilities should be 
``valued at the amount for which they could be ... transferred or settled ... between knowledgeable and willing parties in an arm's length transaction''. 
Solvency assessment of an insurance company is based on future net values of assets and liabilities, and market-consistent valuation enables solvency assessments that takes dependence between future values of assets and liabilities into account.
Moreover, current regulatory frameworks emphasize risk measurement over a one-year period.
In particular, at any given time, the whole liability cash flow is taken into account in terms of the cash flow during the next one-year period and the market-consistent value at the end of the one-year period of the remaining liability cash flow. 
However, liability cash flows are typically not replicable by financial instruments. Therefore, the contribution to the liability value from the residual cash flow resulting from imperfect replication must be determined.

Given an aggregate liability cash flow of an insurance company, portfolios may be formed that generate cash flows with expected values matching that of the liability cash flow. 
The traditional actuarial practice of reserving provides an example of such a portfolio consisting of default-free bonds. In case of dependence between the liability cash flow and market values of financial instruments, more sophisticated replicating portfolios may be more suitable. 
However, the mismatch between the cash flow of such a portfolio and that of the original liability cash flow is typically substantial. The residual liability cash flow must be handled throughout the life of the liability cash flow by making sure that sufficient additional capital is available at all times. 
Capital providers, such as share holders, require compensation for providing buffer capital, which should be taken into account in the liability valuation. 
In particular, capital costs should be accounted for. 

In Solvency II, the so-called technical provisions correspond to the aggregate liability value and is defined as the sum of a best estimate, corresponding to a discounted actuarial fair value, and a so-called risk margin aimed at capturing capital costs. 
Unfortunately, the risk margin in the current regulatory framework lacks a proper definition and theoretical foundation, and different approximation formulas for this ill-defined object have been suggested. Criticism of the risk margin and suggestions for better notions of cost-of-capital margins or market-value margins are found in e.g.~\cite{Moehr-11}, \cite{Ohlsson-Lauzeningks-08}, \cite{Salzmann-Wuthrich-10}, \cite{Tsanakas-Wuthrich-Cerny-13} and \cite{Wuthrich-Embrechts-Tsanakas-11}. 

This paper addresses valuation of an aggregate liability cash flow of an insurance company, although the problem and our suggested solutions apply to liability valuation in other contexts as well. 
We present an approach that, in many aspects, is similar to current practice and has wide-ranging applicability. 
The approach we propose is inspired by \cite{Moehr-11} and \cite{Salzmann-Wuthrich-10}, where the cost-of-capital margin for valuing aggregate liability cash flows is analyzed.
The framework for liability cash flow valuation presented in \cite{Moehr-11} combines financial replication arguments with cost-of-capital considerations. Our proposed framework is on the one hand a generalization of that framework and with more attention paid to mathematical details. On the other hand, for the repeated one-period replication of the residual cash flow, we severely restrict the allowed replication instruments compared to \cite{Moehr-11}. Such a simplification of the problem allows us to derive much stronger results which in turn yields a framework that can be easily adopted, and it avoids many of the computational difficulties highlighted in \cite{Salzmann-Wuthrich-10} without sacrificing conceptual consistency. From a practical perspective, it leads to an approach to valuation of liability cash flows that does not rely heavily on subjective choices of joint dynamics for market prices of possible replication instruments.
The use of financial valuation principles in insurance is inevitable given the principle of market-consistent valuation of liabilities, in particular for liability cash flows with long durations and products with guarantees. 
For more on financial and actuarial valuation of insurance liabilities aimed at solvency assessment, see
 \cite{Wuthrich-Merz-13}.

Conditional monetary risk measures and utility functions are important basic building blocks in the approach to liability cash flow valuation considered here. Extensions from static, or one-period, risk measurement to dynamic risk measurement has been studied extensively for more than a decade following the seminal paper \cite{Artzner-Delbaen-Eber-Heath-99} on one-period risk measurement, 
see e.g.~\cite{Artzner-Delbaen-Eber-Heath-Ku-07}, 
\cite{Cheridito-Delbaen-Kupper-06}, 
\cite{Cvitanic-Karatzas-99},
\cite{Riedel-04},
\cite{Shapiro-12} and
\cite{Wang-99}.
Much of the analysis of dynamic risk measurement has focused on dynamic measurement of risk corresponding to a single cash flow, such as the cash flow of a derivative payoff, at a fixed future time. 
In \cite{Tsanakas-Wuthrich-Cerny-13} a market-value margin for valuation of insurance liabilities is presented based on multi-period mean-variance hedging. However, this valuation framework is not directly applicable to the problem we consider since the liability cash flow considered in \cite{Tsanakas-Wuthrich-Cerny-13} occur at a terminal time whereas we consider cash flows at all times up to a terminal time.
For the problem we consider, there is no natural way to roll cash flows forward and thereby reducing the dynamic risk measurement problem to a considerably simpler problem. The frameworks for dynamic risk measurement developed in \cite{Artzner-Delbaen-Eber-Heath-Ku-07} and \cite{Cheridito-Delbaen-Kupper-06} 
are well-suited to handle liability valuation problems of the type we consider. However, we do not want to restrict the liability cash flows to bounded stochastic processes. Moreover, the liability valuation problem we consider corresponds to repeated one-period replication rather than truly multi-period replication. Another important aspect is that we assume that the capital provider has limited liability and that causes the cost-of-capital margin to lack the convexity properties that are essential for the so-called risk-adjusted values analyzed in \cite{Artzner-Delbaen-Eber-Heath-Ku-07} and similarly for the dynamic risk measures analyzed in \cite{Cheridito-Delbaen-Kupper-06}.  
Risk measurement for multi-period income streams are studied in \cite{Pflug-Ruszczynski-03}
and \cite{Pflug-Ruszczynski-05}, where the dynamic risk measurement problem is formulated as a stochastic optimization problem. There, computational aspects of multi-period risk measurement are clarified and illustrated. Although our approach to liability valuation differ substantially from that in \cite{Pflug-Ruszczynski-03}
and \cite{Pflug-Ruszczynski-05}, computability is an essential feature.

This paper is organized as follows.

Section \ref{sec:COCM_intro} gives a nontechnical derivation of the cost-of-capital margin, as we believe that it should be defined, by economic arguments. 

Section \ref{sec:val_framework} presents a mathematical framework that allows the cost-of-capital margin to be defined rigorously, and establishes its fundamental properties. We also describe how the cost-of-capital margin is related to conditional monetary utility functions in the sense of 
\cite{Cheridito-Delbaen-Kupper-06}, showing that the cost-of-capital margin is conceptually consistent with dynamic monetary utility functions and risk-adjusted values as defined in \cite{Cheridito-Delbaen-Kupper-06} and \cite{Artzner-Delbaen-Eber-Heath-Ku-07}. There is however a major difference. The limited liability property of capital providers is an essential ingredient in our definition of the cost-of-capital margin and may cause violation of concavity/convexity properties. We define the cost-of-capital margin in terms of repeated one-period replication similar to \cite{Moehr-11} and allow capital requirements to be given in terms of conditional versions of nonconvex risk measures such as Value-at-Risk which is the current industry practice for insurance markets subject to the Solvency II regulation. Therefore, convexity properties are not assumed and not essential to us. Time-consistency is however an essential property. For the cost-of-capital margin, this property essentially follows  immediately from the definition.  
Towards the end of Section \ref{sec:val_framework} we consider a family of conditional risk measures that include commonly used risk measures such as Value-at-Risk and spectral risk measures, and we show that this family of risk measures are particularly useful for ensuring stronger properties and explicit formulas for the cost-of-capital margin when the liability cash flow is restricted to certain families of stochastic processes. It is well known, see e.g.~\cite{Cheridito-Stadje-09} and \cite{Shapiro-12}, that conditional or dynamic versions of Value-at-Risk and spectral risk measures are not time-consistent when applied to time periods of varying lengths. In our setting, only repeated conditional single-period risk measurement appears. Therefore, time-inconsistency of risk measurement over time periods of varying lengths does not cause problems for the time-consistency of the cost-of-capital margin. 

Section \ref{sec:CoCM_spec_models} considers specific models for the liability cash flow and the filtration representing the flow of information over time about the remaining cash flow until complete runoff of the liability. Specifically, we consider models of autoregressive type and Gaussian models.
We show that when combined with the general framework presented in Section \ref{sec:val_framework}, these models allow for explicit formulas and stronger results concerning the effects of properties of the chosen filtration. We believe that the explicit expressions presented here constitute candidates for standard formulas for cost-of-capital margin computation that may be adopted in improved future solvency regulation. 
 
Finally, Section \ref{life_example} presents a life-insurance example that illustrates features of the cost-of-capital margin and clarifies computational aspects.

\section{The cost-of-capital margin}\label{sec:COCM_intro}

In this section we derive the cost-of-capital margin without mathematical details, they are found in Section \ref{sec:val_framework}.
 
We consider time periods (years) $1,\dots,T$, corresponding time points $0,1,\dots,T$, and a filtered probability space $(\Omega,\calF,\filF,\P)$, where $\filF=(\calF_t)_{t=0}^{T}$ with $\{\emptyset, \Omega\}=\calF_0\subseteq \dots \subseteq \calF_{T}=\calF$. 

A liability cash flow corresponds to an $\filF$-adapted stochastic process $X^o=(X^o_t)_{t=1}^{T}$ interpreted as a cash flow from an aggregate insurance liability in runoff. Our aim is to give a precise meaning to the market-consistent value of the liability by taking capital costs into account, and provide results that allow this value to be computed. 

When the value of an insurance liability cash flow includes capital costs from capital requirements based on future values of both assets and liabilities, the liability value depends on the future values of all assets, including assets held for investment purpose only. In particular, two companies with identical liability cash flows would assign different market-consistent values to the two 
identical cash flows. This has undesired implications.
Instead, as is done in e.g.~\cite{Moehr-11} and prescribed by EIOPA, see \cite[Article 38]{Commission-del-reg-15}, we take the point of view that an aggregate liability cash flow should be valued by considering a hypothetical transfer of the liability to a separate entity, a so-called reference undertaking, whose assets have the sole purpose of matching the value of the liability as well as possible. 

We will give a meaning to the liability value by a particular two-stage valuation procedure: the first stage corresponds to choosing a replicating portfolio of traded financial instruments, the second stage corresponds to managing the residual cash flow from imperfect replication in the first stage. 

At time $0$, a portfolio is purchased with the aim of generating a cash flow replicating the cash flow $X^o$. This static replicating portfolio has a market price $\pi$ and generates the cash flow $X^s=(X^s_t)_{t=1}^T$. We use the wording ``static'' in order the emphasize that, for the purpose of valuation, it is a portfolio strategy that is fixed throughout the life of the liability cash flow. However, the replicating portfolio may be ``dynamic'' in the sense that its cash flow depends on events not known at time $0$.
This is completely in line with pricing a financial derivative in terms of the initial market price of a self-financing hedging strategy. 
If $X^o$ is independent of financial asset prices, then the canonical example is a portfolio of zero-coupon bonds generating the cash flow $X^s=\E{X^o}=(\E{X^o_t})_{t=1}^T$. 

The value of the original liability is defined as the sum of the market price $\pi$ of the replicating portfolio and the value $V_0(X)$ of the residual cash flow $X:=X^o-X^s$ from repeated one-period replication using only cash funds provided by a capital provider with limited liability requiring compensation for capital costs. That is, the cash flow $X$ will be re-valued at all times $1,\dots,T$. 
We call the value $V_0(X)$ of the residual cash flow the cost-of-capital margin. Note that we below refer also to $V_t(X)$ as the cost-of-capital margin for all $t = 0,\dots,T$. We emphasize that repeated one-period replication is done with cash only. Allowing for repeated one-period replication using a mix of assets such as bonds with short time to maturity inevitably makes the value of the liability cash flow depend on subjective views on the development over time of spot rates over different time horizons.

Next we will present economic arguments that lead to a recursion defining $V_t(X)$ in terms of $X_{t+1}$ and $V_{t+1}(X)$, capital requirements and the acceptability condition of the capital provider.
At time $t$, the insurance company is required to hold the capital $R_t(-X_{t+1}-V_{t+1}(X))$, where $R_t$ is a conditional monetary risk measure, see Definition \ref{def:dynrisk}, quantifying the risk from liability cash flow during year $t+1$ and the value at time $t+1$ of the remaining residual cash flow $(X_{t+2},\dots,X_T)$.
The capital provider is asked to provide the capital 
\begin{align}\label{eq:SCR}
C_t:=R_t(-X_{t+1}-V_{t+1}(X))-V_t(X).
\end{align}
The amount $C_t$ is the difference between the required capital 
$R_t(-X_{t+1}-V_{t+1}(X))$ and the value $V_t(X)$ of the residual liability cash flow at time $t$. 
If the capital provider accepts providing $C_t$ at time $t$, then at time $t+1$ the capital $R_t(-X_{t+1}-V_{t+1}(X))-X_{t+1}$ is available. If this amount exceeds the value $V_{t+1}(X)$ of the liability at time $t+1$, then the capital provider collects the excess capital as a compensation for providing the buffer capital $C_t$ at time $t$. 
Moreover, the capital provider has limited liability: if 
\begin{align*}
R_t(-X_{t+1}-V_{t+1}(X))-X_{t+1}-V_{t+1}(X)<0,
\end{align*}
then the capital provider has no obligation to provide further capital to offset the deficit. 

The capital provider's acceptability condition at time $t$ is expressed in terms of a conditional monetary utility function $U_t$, see Definition \ref{def:dynutil}, and a value $\eta_t>0$ quantifying the size of the intended compensation to the capital provider for making capital available:
\begin{align}\label{eq:ac_U}
U_t\big(\big(R_t(-X_{t+1}-V_{t+1}(X))-X_{t+1}-V_{t+1}(X)\big)_+\big)\geq (1+\eta_t)C_t,
\end{align}
where $x_+=\max(x,0)$. 
$U_t$ may be chosen as the conditional expectation although alternatives that take the risk aversion of the capital provider into account may be more appropriate.

Combining \eqref{eq:SCR} and \eqref{eq:ac_U} now gives, with $Y_{t+1}:=X_{t+1}+V_{t+1}(X)$,
\begin{align*}
V_t(X)\geq R_t(-Y_{t+1})-\frac{1}{1+\eta_t}U_t\big(\big(R_t(-Y_{t+1})-Y_{t+1}\big)_+\big).
\end{align*}
If the inequality above is strict, then the capital provider obtains a better-than-required investment opportunity at the expense of policy holders who are obliged to pay higher-than-needed premiums.
Therefore, we define the value of the cash flow as the smallest value for which the capital provider finds the investment opportunity acceptable. That is, we replace the inequality above by an equality:
\begin{align}\label{eq:recursive_valuation}
V_t(X):=R_t(-Y_{t+1})-\frac{1}{1+\eta_t}U_t\big(\big(R_t(-Y_{t+1})-Y_{t+1}\big)_+\big).
\end{align}
Recall that the market-consistent value we assign to the original cash flow is the sum of the market price $\pi$ of a replicating portfolio set up at time $0$ and the cost-of-capital margin $V_0(X)$.
Notice that if perfect initial replication is possible, then $X=0$ as there is no residual cash flow, and consequently $V_0(X)=0$ since no capital funds for repeated one-period replication are needed.

%For deriving the cost-of-capital margin, we assume that the static replicating portfolio is not changed as new information becomes available. This may seem unrealistic, but makes much sense from a valuation perspective. If, for the purpose of valuation, we allow for rebalancing of the static replicating portfolio at each time step, then we would need a joint multi-period model for all relevant financial instruments and the cash flow dynamics. Besides the fact that it would not be possible to formulate a recursion giving the value of the cost-of-capital margin, any such value would be highly influenced by model uncertainty in the joint price dynamics for the financial instruments. As a consequence, a wide range of possible values for the cost-of-capital margin would make such values highly subjective quantities. Thus, what the current framework provides is a reasonable procedure for mark-to-model valuation that will be unambiguous in the way it is calculated and hence comparable.

\section{The valuation framework}\label{sec:val_framework}

We consider time periods $1,\dots,T$, corresponding time points $0,1,\dots,T$, and a filtered probability space $(\Omega,\calF,\filF,\P)$, where $\filF=(\calF_t)_{t=0}^{T}$ with $\{\emptyset, \Omega\}=\calF_0\subseteq \dots \subseteq \calF_{T}=\calF$. 
Let $L^0(\calF_t):=L^0(\Omega,\calF_t,\P)$ denote the vector space of all real-valued $\calF_t$-measurable random variables, and let $L^0_+(\calF_t)$ be the subset of $L^0(\calF_t)$ of random variables taking values in $[0,\infty)$.
$L^0(\calF_0)$ is simply the set of constants, i.e.~$\R$.
For $p\in (0,\infty)$, let $L^p(\calF_t):=\{Y\in L^0(\calF_t):\E{|Y|^p}<\infty\}$ and let $L^p_+(\calF_t)$ denotes the  subset of $L^p(\calF_t)$ of random variables taking values in $[0,\infty)$.
Finally, $L^{\infty}(\calF_t)$ consists of the essentially bounded $\calF_t$-measurable random variables: $Y\in L^0(\calF_t)$ such that 
\begin{align*}
\inf\{r>0:\P(\omega\in\Omega:|Y(\omega)|>r)=0\}<\infty.
\end{align*}
We say that two random variables are equal if they coincide $\P$-almost surely (a.s.). All equalities and inequalities between random variables are interpreted in the $\P$-a.s. sense.  

In order to determine the cost-of-capital margin we consider conditional monetary risk measures $R_t$ and conditional monetary utility functions $U_t$. We express values and cash flows via a num\'eraire which we take to be a money market account that pays no interest, where money can be deposited and later withdrawn. We have no need for and do not assume risk-free borrowing. Choosing the num\'eraire to be a money market account paying stochastic interest rates would not pose mathematical difficulties but would force us to pay more attention to the interpretation of the cash flows. 

By a dynamic monetary risk measure $R:=(R_t)_{t=0}^{T-1}$ quantifying one-period capital requirements we mean the following:

\begin{definition}\label{def:dynrisk}
For $p\in [0,\infty]$, a dynamic monetary risk measure $(R_t)_{t=0}^{T-1}$ is a sequence of mappings $R_t:L^p(\calF_{t+1})\to L^p(\calF_t)$ satisfying
\begin{align}
& \textrm{if } \lambda\in L^p(\calF_t) \textrm{ and } Y\in L^p(\calF_{t+1}), \textrm{ then } 
R_t(Y+\lambda)=R_t(Y)-\lambda,  \label{eq:ti_r}\\
& \textrm{if } Y,\widetilde{Y}\in L^p(\calF_{t+1}) \textrm{ and } Y\leq \widetilde{Y}, \textrm{ then } 
R_t(Y)\geq R_t(\widetilde{Y}),  \label{eq:mo_r}\\
&\textrm{if } c\in L^p_+(\calF_t) \textrm{ and } Y\in L^p(\calF_{t+1}), \textrm{ then } 
R_t(cY)=cR_t(Y). \label{eq:ph_r}
\end{align}
\end{definition}
We refer to the properties \eqref{eq:ti_r}-\eqref{eq:ph_r} as translation invariance, monotonicity and positive homogeneity, respectively.

By a dynamic monetary utility function $U:=(U_t)_{t=0}^{T-1}$ quantifying one-period acceptability for capital providers we mean the following:
  
\begin{definition}\label{def:dynutil}
For $p\in [0,\infty]$, a dynamic monetary utility function $(U_t)_{t=0}^{T-1}$ is a sequence of mappings $U_t:L^p(\calF_{t+1})\to L^p(\calF_t)$ satisfying
\begin{align}
& \textrm{if } \lambda\in L^p(\calF_t) \textrm{ and } Y\in L^p(\calF_{t+1}), \textrm{ then } 
U_t(Y+\lambda)=U_t(Y)+\lambda, \label{eq:ti_u}\\
& \textrm{if } Y,\widetilde{Y}\in L^p(\calF_{t+1}) \textrm{ and } Y\leq \widetilde{Y}, \textrm{ then } 
U_t(Y)\leq U_t(\widetilde{Y}), \label{eq:mo_u}\\
&\textrm{if } c\in L^p_+(\calF_t) \textrm{ and } Y\in L^p(\calF_{t+1}), \textrm{ then } 
U_t(cY)=cU_t(Y). \label{eq:ph_u}
\end{align}
\end{definition}
We refer also to the properties \eqref{eq:ti_u}-\eqref{eq:ph_u} as translation invariance, monotonicity and positive homogeneity, respectively.

The following proposition provides the basis for defining the cost-of-capital margin in \eqref{eq:recursive_valuation} rigorously.

\begin{proposition}\label{lem:basic_lem}
Fix $p\in [0,\infty]$. Let $(R_t)_{t=0}^{T-1}$ and $(U_t)_{t=0}^{T-1}$ be given by Definitions \ref{def:dynrisk} and \ref{def:dynutil}, respectively, and let $0<\eta_t\in L^0(\calF_t)$. \\
(i) 
\begin{align}\label{eq:vy}
W_t(Y):=R_t(-Y)-\frac{1}{1+\eta_t}U_t\big(\big(R_t(-Y)-Y\big)_+\big)
\end{align}
is a mapping from $L^p(\calF_{t+1})$ to $L^p(\calF_{t})$ having the properties
\begin{align}
& \textrm{if } \lambda\in L^p(\calF_t) \textrm{ and } Y\in L^p(\calF_{t+1}), \textrm{ then } 
W_t(Y+\lambda)=W_t(Y)+\lambda, \label{eq:ti_w}\\
& \textrm{if } Y,\widetilde{Y}\in L^p(\calF_{t+1}) \textrm{ and } Y\leq \widetilde{Y}, \textrm{ then } 
W_t(Y)\leq W_t(\widetilde{Y}), \label{eq:mo_w}\\
&\textrm{if } c\in L^p_+(\calF_t) \textrm{ and } Y\in L^p(\calF_{t+1}), \textrm{ then } 
W_t(cY)=cW_t(Y). \label{eq:ph_w}
\end{align}
(ii)
Let $(X_t)_{t=1}^T$ be an $\filF$-adapted cash flow with $X_t\in L^p(\calF_t)$ for every $t$. 
The cost-of-capital margin $V_t(X)$ in \eqref{eq:recursive_valuation} satisfies 
\begin{align}\label{eq:VtWs_rep}
V_t(X)=W_t\circ\dots\circ W_{T-1}(X_{t+1}+\dots+X_T),
\end{align}
where $W_t\circ\dots\circ W_{T-1}$ denotes the composition of mappings 
$W_t,\dots, W_{T-1}$.
\end{proposition}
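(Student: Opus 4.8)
The plan is to prove part (i) property by property, and then deduce part (ii) by backward induction on $t$, using part (i) at each step. For (i), well-definedness comes first: for $Y\in L^p(\calF_{t+1})$, Definition~\ref{def:dynrisk} gives $R_t(-Y)\in L^p(\calF_t)\subseteq L^p(\calF_{t+1})$, hence $R_t(-Y)-Y\in L^p(\calF_{t+1})$, and since $|x_+|\le|x|$ the positive part lies in $L^p_+(\calF_{t+1})$; Definition~\ref{def:dynutil} then puts $U_t$ of it in $L^p(\calF_t)$, so $W_t(Y)\in L^p(\calF_t)$ (the cases $p\in\{0,\infty\}$ and $p\in(0,1)$ all work, since each $L^p$ is a vector space closed under $x\mapsto x_+$). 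Translation invariance \eqref{eq:ti_w}: for $\lambda\in L^p(\calF_t)$, \eqref{eq:ti_r} gives $R_t(-(Y+\lambda))=R_t(-Y)+\lambda$, so $R_t(-(Y+\lambda))-(Y+\lambda)=R_t(-Y)-Y$; the $U_t$-term of $W_t$ is thus unchanged and the leading term shifts by $\lambda$. Positive homogeneity \eqref{eq:ph_w}: for $c\in L^p_+(\calF_t)$, \eqref{eq:ph_r} gives $R_t(-cY)=cR_t(-Y)$, then $(cR_t(-Y)-cY)_+=c(R_t(-Y)-Y)_+$ since $c\ge0$, and \eqref{eq:ph_u} pulls $c$ out of $U_t$; collecting terms gives $cW_t(Y)$ (this also covers $\{c=0\}$, where both sides vanish).

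The main obstacle is monotonicity \eqref{eq:mo_w}. Given $Y\le\widetilde Y$, set $A:=R_t(-Y)$, $B:=R_t(-\widetilde Y)$ and $\delta:=B-A$; by \eqref{eq:mo_r}, $\delta\in L^p_+(\calF_t)$. Using that $a\mapsto a_+$ satisfies $(a+\delta)_+\le a_++\delta$ for $\delta\ge0$ and that $s\mapsto(A-s)_+$ is nonincreasing, one gets
\[
(B-\widetilde Y)_+=(A+\delta-\widetilde Y)_+\le(A+\delta-Y)_+\le(A-Y)_++\delta .
\]
Monotonicity \eqref{eq:mo_u} and translation invariance \eqref{eq:ti_u} of $U_t$ then yield $U_t((B-\widetilde Y)_+)\le U_t((A-Y)_+)+\delta$, and substituting into the definition of $W_t(\widetilde Y)$ gives
\[
W_t(\widetilde Y)\ge B-\tfrac1{1+\eta_t}U_t((A-Y)_+)-\tfrac{\delta}{1+\eta_t}
= A+\delta\,\tfrac{\eta_t}{1+\eta_t}-\tfrac1{1+\eta_t}U_t((A-Y)_+)\ge W_t(Y),
\]
the last step using $\delta\ge0$ and $\eta_t>0$. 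This is the only place where the precise structure of $W_t$ and the strict positivity $\eta_t>0$ enter: $R_t$ reverses inequalities while $a\mapsto(A-a)_+$ is decreasing, so a naive monotonicity argument would fail, and the surplus $\delta$ must be absorbed by the factor $\eta_t/(1+\eta_t)$.

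For part (ii), observe that \eqref{eq:recursive_valuation} reads $V_t(X)=W_t(X_{t+1}+V_{t+1}(X))$ with the terminal convention $V_T(X)=0$ (no residual cash flow remains after $T$), and that part (i) guarantees this recursion is well defined with $V_t(X)\in L^p(\calF_t)$. Backward induction on $t$: the base case $t=T-1$ is $V_{T-1}(X)=W_{T-1}(X_T)$, which is \eqref{eq:VtWs_rep}. For the step, assume $V_{t+1}(X)=W_{t+1}\circ\dots\circ W_{T-1}(X_{t+2}+\dots+X_T)$. Since each $W_s$ satisfies \eqref{eq:ti_w} for $L^p(\calF_s)$-scalars and $X_{t+1}\in L^p(\calF_{t+1})\subseteq L^p(\calF_s)$ for every $s\in\{t+1,\dots,T-1\}$, a short inner induction over $s$ (peeling off $W_{T-1},W_{T-2},\dots$) shows the composition $W_{t+1}\circ\dots\circ W_{T-1}$ is translation invariant with respect to $L^p(\calF_{t+1})$; hence
\[
X_{t+1}+V_{t+1}(X)=W_{t+1}\circ\dots\circ W_{T-1}(X_{t+1}+\dots+X_T).
\]
Applying $W_t$ and the induction hypothesis gives $V_t(X)=W_t\circ\dots\circ W_{T-1}(X_{t+1}+\dots+X_T)$. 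The only bookkeeping needed is to track the filtration indices so that each intermediate value lies in the correct $L^p(\calF_s)$ and to note that the degenerate instances ($t=T-1$ with an empty inner composition, and $p\in\{0,\infty\}$) cause no difficulty.
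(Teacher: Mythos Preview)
Your proof is correct and follows essentially the same route as the paper. Your monotonicity argument is identical in substance: you bound $(R_t(-\widetilde Y)-\widetilde Y)_+$ above by $(R_t(-Y)-Y)_++\delta$ with $\delta=R_t(-\widetilde Y)-R_t(-Y)\in L^p_+(\calF_t)$, then apply \eqref{eq:mo_u} and \eqref{eq:ti_u} and absorb the surplus via the factor $\eta_t/(1+\eta_t)$; the paper does the two elementary inequalities in the opposite order but arrives at the same bound. For part (ii) the paper simply says ``repeated application of \eqref{eq:ti_w}'', which is exactly your inner induction pulling $X_{t+1}$ through the composition $W_{t+1}\circ\dots\circ W_{T-1}$.
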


\begin{proof}[Proof of Proposition \ref{lem:basic_lem}]
Since $(1+\eta_t)^{-1}$ is $\calF_t$-measurable and takes values in $(0,1)$, it follows directly from the definitions of $R_t$ and $U_t$ that $W_t$ is a mapping from $L^p(\calF_{t+1})$ to $L^p(\calF_t)$.
The properties \eqref{eq:ti_w} and \eqref{eq:ph_w} for $W_t$ follow immediately from the corresponding properties of $R_t$ and $U_t$. It remains to verify property \eqref{eq:mo_w} for $W_t$. Take $Y_{t+1}\leq \widetilde{Y}_{t+1}$ in $L^p(\calF_{t+1})$. Then
\begin{align*}
(R_t(-\widetilde{Y}_{t+1})-\widetilde{Y}_{t+1})_+
&\leq (R_t(-Y_{t+1})-\widetilde{Y}_{t+1})_++R_t(-\widetilde{Y}_{t+1})-R_t(-Y_{t+1})\\
&\leq (R_t(-Y_{t+1})-Y_{t+1})_++R_t(-\widetilde{Y}_{t+1})-R_t(-Y_{t+1}).
\end{align*}
Since $R_t(-\widetilde{Y}_{t+1})-R_t(-Y_{t+1})\in L^p_+(\calF_t)$, \eqref{eq:ti_u} and \eqref{eq:mo_u} together imply that
\begin{align*}
&U_t((R_t(-\widetilde{Y}_{t+1})-\widetilde{Y}_{t+1})_+)-U_t((R_t(-Y_{t+1})-Y_{t+1})_+)\\
&\quad\leq R_t(-\widetilde{Y}_{t+1})-R_t(-Y_{t+1})
\end{align*}
which further implies
\begin{align*}
W_t(\widetilde{Y}_{t+1})-W_t(Y_{t+1})\geq (R_t(-\widetilde{Y}_{t+1})-R_t(-Y_{t+1}))\frac{\eta_t}{1+\eta_t}\geq 0
\end{align*}
and verifies the property \eqref{eq:mo_w}. 
Finally, we verify the representation \eqref{eq:VtWs_rep} of $V_t(X)$ in terms of $W_t,\dots,W_{T-1}$ and $X_{t+1}+\dots+X_T$. If $X_s\in L^p(\calF_s)$ for $s=t+1,\dots,T$, then $X_{t+1}+\dots+X_T\in L^p(\calF_T)$ and the right-hand side in \eqref{eq:VtWs_rep} is well-defined. Repeated application of \eqref{eq:ti_w} now verifies the representation \eqref{eq:VtWs_rep}. 
\end{proof}

\begin{definition}\label{def:COCM}
Fix $p\in [0,\infty]$. Let $(R_t)_{t=0}^{T-1}$ and $(U_t)_{t=0}^{T-1}$ be given by Definitions \ref{def:dynrisk} and \ref{def:dynutil}, respectively, and let $(W_t)_{t=0}^{T-1}$ be given by \eqref{eq:vy}.
Let $(X_t)_{t=1}^T$ be an $\filF$-adapted cash flow with $X_t\in L^p(\calF_t)$ for every $t$. 
We define the cost-of-capital margins $V_t(X)$ as 
\begin{align}\label{eq:VtWs_rep_def}
V_t(X):=W_t\circ\dots\circ W_{T-1}(X_{t+1}+\dots+X_T), \quad t=0,\dots,T-1,
\end{align}
and $V_T(X):=0$.
\end{definition}

Notice that we may express \eqref{eq:VtWs_rep_def} as 
$V_t(X):=W_t(X_{t+1}+V_{t+1}(X))=W_t(Y_{t+1})$.

\begin{proposition}\label{prop:basic_prop}
Let $X,\widetilde{X}$ be $\filF$-adapted cash flows with $X_t,\widetilde{X}_t\in L^p(\calF_t)$ for every $t$. \\
(i) Let $a\in L^p_+(\calF_t)$, let $b$ be a $T$-dimensional vector with components in $L^p(\calF_t)$, and let $X_u\leq \widetilde{X}_u$ for each $u$. Then, for every $t<T$, 
\begin{align*}
V_t(aX)=aV_t(X), \quad V_t(X+b)=V_t(X)+\sum_{u=t+1}^Tb_t, \quad V_t(X)\leq V_t(\widetilde{X}).
\end{align*} 
(ii) The cost-of-capital margins are time consistent in the sense that for 
every pair of times $(s,t)$ with $s\leq t$, the two conditions $(X_u)_{u=1}^{t}=(\widetilde{X}_u)_{u=1}^{t}$ and $V_t(X)\leq V_t(\widetilde{X})$ together imply $V_s(X)\leq V_s(\widetilde{X})$.
\end{proposition}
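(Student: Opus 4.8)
The plan is to reduce both parts to Proposition \ref{lem:basic_lem}: the representation $V_t(X) = W_t \circ \dots \circ W_{T-1}(X_{t+1} + \dots + X_T)$ together with the three structural properties \eqref{eq:ti_w}--\eqref{eq:ph_w} of each single-period map $W_u$. The one recurring technical point I would flag at the outset is a measurability remark: if $\lambda \in L^p(\calF_t)$ (resp.\ $a \in L^p_+(\calF_t)$) then $\lambda \in L^p(\calF_u)$ (resp.\ $a \in L^p_+(\calF_u)$) for every $u \ge t$, since $\calF_t \subseteq \calF_u$. Consequently, when pushing such a $\lambda$ or $a$ outward through the composition $W_t \circ \dots \circ W_{T-1}$, the relevant property \eqref{eq:ti_w} or \eqref{eq:ph_w} is applicable at each stage, because $W_u$ is applied to an element of $L^p(\calF_{u+1})$ and modified only by an $\calF_u$-measurable quantity; I would state this once and then invoke it freely.

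For part (i) I would proceed property by property. \emph{Positive homogeneity}: $V_t(aX) = W_t \circ \dots \circ W_{T-1}\big(a(X_{t+1}+\dots+X_T)\big)$, and successively applying \eqref{eq:ph_w} to $W_{T-1}, W_{T-2}, \dots, W_t$ extracts the factor $a$, giving $aV_t(X)$. \emph{Translation invariance}: since $\sum_{u=t+1}^T (X_u + b_u) = \big(\sum_{u=t+1}^T X_u\big) + \big(\sum_{u=t+1}^T b_u\big)$ with $\sum_{u=t+1}^T b_u \in L^p(\calF_t)$, successively applying \eqref{eq:ti_w} from $W_{T-1}$ down to $W_t$ pulls out this constant, giving $V_t(X+b) = V_t(X) + \sum_{u=t+1}^T b_u$. \emph{Monotonicity}: $X_u \le \widetilde X_u$ for all $u$ gives $\sum_{u=t+1}^T X_u \le \sum_{u=t+1}^T \widetilde X_u$, and successively applying \eqref{eq:mo_w} from $W_{T-1}$ down to $W_t$ preserves the inequality, giving $V_t(X) \le V_t(\widetilde X)$. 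None of these is an obstacle; they are bookkeeping on the composition.

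For part (ii), the step that carries the content is a nesting (tower) identity that I would establish first: for $s \le t$,
\begin{align*}
V_s(X) = W_s \circ \dots \circ W_{t-1}\Big(\sum_{u=s+1}^{t} X_u + V_t(X)\Big),
\end{align*}
obtained by splitting $\sum_{u=s+1}^T X_u = \sum_{u=s+1}^t X_u + \sum_{u=t+1}^T X_u$ in the definition of $V_s(X)$ and pushing the $\calF_t$-measurable quantity $\sum_{u=s+1}^t X_u$ outward through $W_t \circ \dots \circ W_{T-1}$ via repeated use of \eqref{eq:ti_w}, the residual inner expression being exactly $V_t(X) \in L^p(\calF_t)$ by Proposition \ref{lem:basic_lem}(i); here an empty composition is read as the identity and an empty sum as $0$, and the case $t = T$ is trivial since then the hypothesis forces $X = \widetilde X$. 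Granting this identity, time consistency is immediate: if $(X_u)_{u=1}^t = (\widetilde X_u)_{u=1}^t$ then $\sum_{u=s+1}^t X_u = \sum_{u=s+1}^t \widetilde X_u$, so combined with $V_t(X) \le V_t(\widetilde X)$ the argument of $W_s \circ \dots \circ W_{t-1}$ for $X$ is dominated by the one for $\widetilde X$, and repeated use of \eqref{eq:mo_w} gives $V_s(X) \le V_s(\widetilde X)$.

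I expect no genuine difficulty beyond Proposition \ref{lem:basic_lem}; the work is entirely in organizing the composition of the $W_u$ and tracking measurability. The two potential pitfalls are (a) applying \eqref{eq:ti_w}/\eqref{eq:ph_w} at an intermediate stage where the perturbing or scaling quantity must be verified to be measurable with respect to the correct $\sigma$-algebra, which is handled by the opening remark, and (b) in part (ii), recognizing that the right object to isolate is the nesting identity rather than attempting to compare the full compositions for $X$ and $\widetilde X$ directly.
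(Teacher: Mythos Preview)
Your proposal is correct and follows essentially the same approach as the paper. For part (i) the paper simply says the properties ``follow immediately from \eqref{eq:VtWs_rep_def},'' which is exactly your iterated application of \eqref{eq:ti_w}--\eqref{eq:ph_w}; for part (ii) the paper reduces to the one-step case $s=t-1$ via the recursion $V_{t-1}(X)=W_{t-1}(X_t+V_t(X))$ and then applies \eqref{eq:mo_w}, whereas you establish the general nesting identity $V_s(X)=W_s\circ\dots\circ W_{t-1}\big(\sum_{u=s+1}^t X_u+V_t(X)\big)$ up front---these are the same argument, yours simply unrolling the induction explicitly.
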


\begin{proof}[Proof of Proposition \ref{prop:basic_prop}]
(i) The properties follow immediately from \eqref{eq:VtWs_rep_def}.
(ii) In proving time consistency it is sufficient to take $s=t-1$. By \eqref{eq:mo_w},
\begin{align*}
V_{t-1}(X)&=W_{t-1}(X_t+V_t(X))\\
&=W_{t-1}(\widetilde{X}_t+V_t(X))\\
&\leq W_{t-1}(\widetilde{X}_t+V_t(\widetilde{X}))\\
&=V_{t-1}(\widetilde{X}).
\end{align*} 
\end{proof}

The dynamic version of Value-at-Risk presented in the example below is an example of a dynamic monetary risk measure $(R_t)_{t=0}^{T-1}$ with $R_t:L^0(\calF_{t+1})\to L^0(\calF_t)$.
In Section \ref{sec:nice_mappings} further examples of dynamic monetary risk measures and utility functions are presented and their properties are investigated for use in Section \ref{sec:CoCM_spec_models} together with specific models for the liability cash flows.

\begin{example}
In the static or one-period setting, Value-at-Risk at time $0$ at level $u\in (0,1)$ of a value $Z\in L^0(\calF_{1})$ is defined as 
\begin{align*}
\VaR_u(Z)&:=\min\{m\in\R:\P(m+Z<0)\leq u\}\\
&=\min\{m\in\R:\P(-Z\leq m)\geq 1-u\}\\
&=\min\{m\in\R:Q_{-Z}((-\infty,m])\geq 1-u)\}\\
&=:F_{-Z}^{-1}(1-u),
\end{align*}
where $Q_{-Z}$ denotes the distribution of $-Z$, and $F_{-Z}(m)=Q_{-Z}((-\infty,m])$.
The natural dynamic version of Value-at-Risk at time $t$ at level $u$ of a value $Z\in L^0(\calF_{t+1})$ is
\begin{align*}
\VaR_{t,u}(Z):=\ess\,\inf\{m\in L^0(\calF_t):\P(m+Z<0\mid\calF_t)\leq u\},
\end{align*}
where ``$\ess\,\inf$'' denotes the greatest lower bound of a family of random variables (with respect to $\P$-almost sure inequality). Alternatively, we may define $\VaR_{t,u}(Z)$ in terms of a conditional distribution $Q_{t,-Z}$ of $-Z$ with respect to $\calF_t$:
for each $\omega\in\Omega$, $Q_{t,-Z}(\omega,\cdot)$ is a probability measure on the Borel subsets of $\R$, and for each Borel set $A\subset \R$, $Q_{t,-Z}(\cdot,A)$ is a version of $\P(-Z\in A\mid\calF_t)$.
Define $Z'\in L^0(\calF_t)$ by 
\begin{align*}
Z'(\omega)&:=\min\{m\in\R:Q_{t,-Z}(\omega,(-\infty,m])\geq 1-u\}
=:F_{t,-Z}^{-1}(\omega,1-u)
%Z'(\omega):=-\sup\{m\in\R:Q_{t,Z}(\omega,(-\infty,m))\leq p\},
\end{align*}
and notice that $\VaR_{t,u}(Z)=Z'$ $\P$-almost surely. Notice that $(\VaR_{t,u})_{t=0}^{T-1}$ satisfies the properties in Definition \ref{def:dynrisk} for $p=0$.
%and define $\VaR_{t,p}(Z)$ as any random variable satisfying $\VaR_{t,p}(Z)=Z'$ $\P$-almost surely.
%Notice that $-\VaR_{t,p}\in L^0_{t+1,t}(\filF)$.
\end{example}

\subsection{Model-invariant bounds}\label{sec:cocm_bounds}

Consider a dynamic risk measure $(R_t)_{t=0}^{T-1}$ and a dynamic monetary utility function $(U_t)_{t=0}^{T-1}$, and $(W_t)_{t=0}^{T-1}$ given by \eqref{eq:vy}. 
Consider also an $\filF$-adapted cash flow $(X_t)_{t=1}^T$ with $X_t\in L^p(\calF_t)$ for every $t$.
With $Y_{t+1}:=X_{t+1}+V_{t+1}(X)$,
\begin{align*}
V_t(X)&=R_t(-Y_{t+1})-\frac{1}{1+\eta_t}U_t((R_t(-Y_{t+1})-Y_{t+1})_+)\\
&\leq R_t(-Y_{t+1})\\
&=R_t(-X_{t+1}-V_{t+1}(X)).
\end{align*}
Repeated application of this inequality together with \eqref{eq:ti_r} and \eqref{eq:mo_r} gives the upper bound
\begin{align*}
V_{t}(X)&\leq R_t(-X_{t+1}-R_{t+1}(-X_{t+2}-\dots-R_{T-1}(-X_T)\dots))\\
&=R_t(-R_{t+1}(\dots-R_{T-1}(-X_{t+1}-\dots-X_T)\dots)).
\end{align*}
Further assumptions clearly enable sharper bounds. Suppose that $p\geq 1$ and take, for every $t$, $U_t$ to be the conditional expectation given $\calF_t$. 
It follows from Jensen's inequality for conditional expectations that the conditional expectation is well defined as a mapping $L^p(\calF_{u})\to L^p(\calF_t)$ for $u>t$.
In particular,
\begin{align}
V_t(X)&=R_t(-Y_{t+1})-\frac{1}{1+\eta_t}\E{\big(R_t(-Y_{t+1})-Y_{t+1}\big)_+\mid\calF_t} \nonumber\\
&\leq R_t(-Y_{t+1})-\frac{1}{1+\eta_t}\E{R_t(-Y_{t+1})-Y_{t+1}\mid\calF_t} \nonumber\\
&=\frac{1}{1+\eta_t}\Big(\eta_tR_t(-Y_{t+1})+\E{X_{t+1}\mid\calF_t}+\E{V_{t+1}(X)\mid\calF_t}\Big).\label{eq:Vtupper_bound}
\end{align}
Applying this inequality repeatedly together with the tower property of conditional expectation yields
\begin{align*}
V_t(X)\leq 
\sum_{s=t}^{T-1}\E{\frac{\eta_sR_s(-Y_{s+1})}{\prod_{u=t}^s(1+\eta_u)}\mid\calF_t}
+\sum_{s=t}^{T-1}\E{\frac{X_{s+1}}{\prod_{u=t}^s(1+\eta_u)}\mid\calF_t}.
\end{align*}
Notice that if $\eta_t=\eta_0$ for all $t$, then 
\begin{align*}
V_0(X)&\leq 
\eta_0\sum_{t=0}^{T-1}\frac{\E{R_t(-Y_{t+1})}}{(1+\eta_0)^{t+1}}
+\sum_{t=1}^{T}\frac{\E{X_t}}{(1+\eta_0)^{t}}.
\end{align*}
Notice that if, further, the static replicating portfolio is chosen at time $0$ such that $\E{X^s}=\E{X^o}$, then the residual cash flow $X:=X^o-\E{X^o}$ has zero mean. In particular, then the second sum in the above upper bound vanishes, i.e.
\begin{align}\label{eq:V0upper_bound}
V_0(X)\leq \eta_0\sum_{t=0}^{T-1}\frac{\E{R_t(-Y_{t+1})}}{(1+\eta_0)^{t+1}}.
\end{align}
Notice that the upper bound for $V_t(X)$ in \eqref{eq:Vtupper_bound} can be rewritten as
\begin{align*}
V_t(X)\leq \eta_t(R_t(-Y_{t+1})-V_t(X))+\E{X_{t+1}\mid\calF_t}+\E{V_{t+1}(X)\mid\calF_t}.
\end{align*}
Repeated application of this inequality and use of the tower property of conditional expectation yields
\begin{align*}
V_0(X)\leq\sum_{t=0}^{T-1}\E{\eta_t\Big(R_t(-Y_{t+1})-V_t(X)\Big)}+\sum_{t=1}^T\E{X_t}.
\end{align*}
Hence, if further $\eta_t=\eta_0$ for all $t$ and $\E{X}=0$, then 
\begin{align}\label{eq:V0upper_bound_alt}
V_0(X)\leq\eta_0\sum_{t=0}^{T-1}\E{R_t(-Y_{t+1})-V_t(X)}.
\end{align}
Notice the difference between the two upper bounds 
\eqref{eq:V0upper_bound} and \eqref{eq:V0upper_bound_alt}:
the former is formulated in terms of expected future capital requirement whereas 
the latter is formulated in terms of expected future buffer capital provided by capital providers.

We end the discussion of model-invariant bounds for the cost-of-capital margin with a comment on the Solvency II risk margin. In \cite[Article 37]{Commission-del-reg-15} it is stated that the risk margin should be computed as 
\begin{align*}
\CoC \sum_{t\geq 0}\frac{\SCR(t)}{(1+r(t+1))^{t+1}},
\end{align*}
where $\CoC:=0.06$, $r(t+1)$ denotes the basic risk-free interest rate for the maturity of $t+1$ years, and $\SCR(t)$ denotes the Solvency Capital Requirement after $t$ years. In our setting, $\SCR(t)=0$ for $t\geq T$ since there is no liability cash flow beyond that time. One may criticize several aspects of the Solvency II risk margin. First, for $t\geq 1$, $\SCR(t)$ is a random variable as seen from time $0$. Secondly, $\SCR(t)$ does not take capital costs into account, and the discounting in the computation of $\SCR(t)$ and in the computation of the risk margin are not conceptually consistent.  

The upper bound in \eqref{eq:V0upper_bound_alt} is somewhat similar to the formula for the Solvency II risk margin. 
For $p\geq 1$, take $(R_t)_{t=0}^{T-1}$ to be a dynamic risk measure in the sense of Definition \ref{def:dynrisk}, and take the $U_t$ to be conditional expectations $\E{\cdot \mid \calF_t}$.
If we define the SCR-like quantity 
\begin{align*}
\widetilde{\SCR}_t(X^o)&:=R_t\Big(\sum_{u=t+1}^T\E{X^o_u\mid\calF_t}+V_t(X^o-\E{X^o\mid\calF_t})\\
&\quad\quad\quad-X^o_{t+1}-\sum_{u=t+2}^T\E{X^o_u\mid\calF_{t+1}}-V_{t+1}(X^o-\E{X^o\mid\calF_{t+1}})\Big),
\end{align*}
then, using the translation invariance of $V_t$ and $V_{t+1}$ in Proposition \ref{prop:basic_prop}, it can be easily shown that 
$\widetilde{\SCR}_t(X^o)=R_t(-Y_{t+1})-V_t(X)$,
where, as before, $X:=X^o-\E{X^o\mid\calF_0}$ and $Y_{t+1}:=X_{t+1}+V_{t+1}(X)$. 
In particular, \eqref{eq:V0upper_bound_alt} can be rephrased as 
\begin{align*}
V_0(X)\leq\eta_0\sum_{t=0}^{T-1}\E{\widetilde{\SCR}_t(X^o)}.
\end{align*} 
For an in-depth comparison between a conceptually consistent notion of cost-of-capital margin and the Solvency II risk margin, see Section 5 in \cite{Moehr-11}.

\subsection{Dynamic monetary utility functions}

The notions of conditional and dynamic monetary utility functions in 
\cite{Cheridito-Delbaen-Kupper-06} and risk-adjusted values in 
\cite{Artzner-Delbaen-Eber-Heath-Ku-07}
are closely connected to the cost-of-capital margin considered here. In \cite{Cheridito-Delbaen-Kupper-06} and \cite{Artzner-Delbaen-Eber-Heath-Ku-07}, cumulative cash flows and value processes are considered whereas we consider incremental cash flows and liability processes corresponding to liability values for future cash flows. We will now proceed to establish the connection between $V_t(\cdot)$ (and $W_t(\cdot)$) and the work of \cite{Cheridito-Delbaen-Kupper-06} and \cite{Artzner-Delbaen-Eber-Heath-Ku-07}.

For $p\in [0,\infty]$, let $\calR^p_{1,T}$ denote the space of all $\filF$-adapted stochastic processes $(Y_t)_{t=1}^T$ with $Y_t\in L^p(\calF_t)$ for every $t$.
For $1\leq u\leq v\leq T$, define the projection $\pi_{u,v}:\calR^p_{1,T}\to\calR^p_{1,T}$ by
\begin{align*}
\pi_{u,v}(Y)_t:=1_{u\leq t}Y_{t\wedge v}, \quad t\in \{1,\dots,T\},
\end{align*} 
and $\calR^p_{u,v}:=\pi_{u,v}\calR^p_{1,T}$.
Let $(W_t)_{t=0}^{T-1}$ be as in Proposition \ref{lem:basic_lem}.
For all $t$, let $\phi_{t,T}:\calR^p_{t,T}\to L^p(\calF_t)$ be given by 
\begin{align}\label{eq:CDK}
\phi_{t,T}(Y)&:=-W_t\circ \dots \circ W_{T-1}(-Y_T).
\end{align}
In \cite{Cheridito-Delbaen-Kupper-06} and \cite{Artzner-Delbaen-Eber-Heath-Ku-07} the elements in $\calR^p_{1,T}$ are interpreted as cumulative rather than incremental cash flows.
For an incremental cash flow $X\in \calR^p_{1,T}$, let 
$X^c\in\calR^p_{1,T}$ be the cumulative cash flow given by $X^c_t:=\sum_{s=1}^t X_s$, and notice that 
\begin{align*}
\phi_{t,T}(X^c)&=X^c_t-W_t\circ \dots \circ W_{T-1}(-(X^c_T-X^c_t))\\
&=\sum_{s=1}^tX_s-V_t(-X),
\end{align*}
and similarly, $V_t(X)=-X^c_t-\phi_{t,T}(-X^c)$.

We now verify that $\phi_{t,T}$ in \eqref{eq:CDK} is a conditional monetary utility function in the sense of Definition 3.1 in \cite{Cheridito-Delbaen-Kupper-06}, excluding the concavity axiom, and that $(\phi_{t,T})_{t=1}^T$ is time-consistent in the sense of Definition 4.2 in \cite{Cheridito-Delbaen-Kupper-06}. 

\begin{proposition}\label{prop:CDK}
The mappings $\phi_{t,T}$ in \eqref{eq:CDK} are conditional monetary utility functions in the sense
\begin{align*}
&\phi_{t,T}(0)=0,\\
& \phi_{t,T}(Y)\leq \phi_{t,T}(\widetilde{Y}) \text{ for all } Y,\widetilde{Y}\in\calR^p_{t,T} \text{ such that }
Y\leq \widetilde{Y},\\
& \phi_{t,T}(Y+m1_{[t,T]})=\phi_{t,T}(Y)+m \text{ for all } Y\in\calR^p_{t,T} \text{ and } m\in L^p(\calF_t),
\end{align*} 
and $(\phi_{t,T})_{t=1}^T$ is time-consistent in the sense 
\begin{align}\label{eq:time_consistency_delbaen}
\phi_{t,T}(Y)=\phi_{t,T}(Y1_{[t,u)}+\phi_{u,T}(Y)1_{[u,T]})
\end{align} 
for every $t\leq u \leq T$ and all $X\in\calR^p_{t,T}$.
\end{proposition}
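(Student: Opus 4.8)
The plan is to verify the three defining properties of a conditional monetary utility function for $\phi_{t,T}$ and then the time-consistency relation, in each case reducing the claim to the already-established properties of the one-period mappings $W_t$ in Proposition \ref{lem:basic_lem}. The key observation, to be used throughout, is that $\phi_{t,T}(Y)$ depends on $Y$ only through its terminal value $Y_T = \sum_{s=1}^T X_s$ when $Y=X^c$, and more generally $\phi_{t,T}(Y) = -W_t\circ\dots\circ W_{T-1}(-Y_T)$; so every statement about $\phi_{t,T}$ translates into a statement about the composition $W_t\circ\dots\circ W_{T-1}$ acting on a single random variable in $L^p(\calF_T)$.

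First I would treat normalization: $\phi_{t,T}(0) = -W_t\circ\dots\circ W_{T-1}(0)$, and by positive homogeneity \eqref{eq:ph_w} applied with $c=0$ (or by translation invariance \eqref{eq:ti_w} with $Y=0,\lambda=0$) each $W_s$ maps $0$ to $0$, so the composition does too. Next, monotonicity: if $Y\leq\widetilde Y$ in $\calR^p_{t,T}$ then in particular $Y_T\leq\widetilde Y_T$, hence $-Y_T\geq -\widetilde Y_T$; applying \eqref{eq:mo_w} successively to $W_{T-1}, W_{T-2},\dots,W_t$ gives $W_t\circ\dots\circ W_{T-1}(-Y_T)\geq W_t\circ\dots\circ W_{T-1}(-\widetilde Y_T)$, and negating yields $\phi_{t,T}(Y)\leq\phi_{t,T}(\widetilde Y)$. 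For translation invariance, note that $(Y+m1_{[t,T]})_T = Y_T + m$ with $m\in L^p(\calF_t)\subseteq L^p(\calF_s)$ for all $s\geq t$; then I would peel off $W_{T-1},\dots,W_{t+1}$ one at a time using \eqref{eq:ti_w} (each time the additive constant is measurable with respect to the relevant $\calF_s$, so it passes through unchanged), arriving at $W_t\circ\dots\circ W_{T-1}(-Y_T - m) = W_t\circ\dots\circ W_{T-1}(-Y_T) - m$, whence $\phi_{t,T}(Y+m1_{[t,T]}) = \phi_{t,T}(Y) + m$. This is essentially the ``repeated application of \eqref{eq:ti_w}'' already invoked in the proof of Proposition \ref{lem:basic_lem}.

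The substantive part is the time-consistency identity \eqref{eq:time_consistency_delbaen}. I would unwind the right-hand side: the process $Y1_{[t,u)} + \phi_{u,T}(Y)1_{[u,T]}$ has terminal value ($t\leq u\leq T$, so $T\geq u$) equal to $\phi_{u,T}(Y) = -W_u\circ\dots\circ W_{T-1}(-Y_T)$; therefore $\phi_{t,T}$ of this process equals $-W_t\circ\dots\circ W_{u-1}\big(-\phi_{u,T}(Y)\big) = -W_t\circ\dots\circ W_{u-1}\big(W_u\circ\dots\circ W_{T-1}(-Y_T)\big) = -W_t\circ\dots\circ W_{T-1}(-Y_T) = \phi_{t,T}(Y)$, using associativity of composition. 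The only care needed is the boundary cases $u=t$ and $u=T$ (where one of the two index blocks is empty and $\phi_{u,T}$ is the identity-like endpoint), and checking that $\phi_{u,T}(Y)$, being $\calF_u$-measurable, is a legitimate argument for the block $W_t\circ\dots\circ W_{u-1}$ which maps $L^p(\calF_u)$-valued terminal data appropriately.

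I do not anticipate a genuine obstacle here: all the work was front-loaded into Proposition \ref{lem:basic_lem}, and the proof is essentially bookkeeping that translates ``properties of $\phi_{t,T}$ on processes'' into ``properties of a composition of $W_s$'s on a single terminal random variable.'' If anything is delicate, it is making precise that the right-hand side of \eqref{eq:time_consistency_delbaen} is well-defined as an element of $\calR^p_{t,T}$ (one needs $\phi_{u,T}(Y)\in L^p(\calF_u)$, which follows since each $W_s$ maps $L^p(\calF_{s+1})$ into $L^p(\calF_s)$ by Proposition \ref{lem:basic_lem}(i)) and handling the index-range edge cases cleanly; I would state the convention that an empty composition is the identity to dispatch $u=t$ and $u=T$ uniformly.
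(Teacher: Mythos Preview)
Your proposal is correct and follows essentially the same route as the paper's proof: reduce each claim about $\phi_{t,T}$ to the corresponding property of the one-period maps $W_s$ from Proposition~\ref{lem:basic_lem}, and for time-consistency observe that the terminal value of $Y1_{[t,u)}+\phi_{u,T}(Y)1_{[u,T]}$ is $\phi_{u,T}(Y)$. The only step you pass over quickly---going from $-W_t\circ\dots\circ W_{T-1}(-\phi_{u,T}(Y))$ to $-W_t\circ\dots\circ W_{u-1}(-\phi_{u,T}(Y))$---is exactly the one the paper spells out, using that $-\phi_{u,T}(Y)\in L^p(\calF_u)$ together with $W_s(\lambda)=\lambda$ for $\lambda\in L^p(\calF_s)$ (translation invariance plus $W_s(0)=0$); you have all the ingredients, just make that reduction explicit.
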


\begin{proof}[Proof of Proposition \ref{prop:CDK}]
Since $W_s(0)=0$ for $s =0, ..., T-1$,
\begin{align*}
\phi_{t,T}(0)=-W_t\circ \dots \circ W_{T-1}(-0)=0.
\end{align*} 
Due to the fact that $W_s(0)=0$ for $s =0, \dots, T-1$. 
For $Y,\widetilde{Y}\in\calR^p_{t,T}$ such that $Y\leq \widetilde{Y}$,
\begin{align*}
&\phi_{t,T}(\widetilde{Y})-\phi_{t,T}(Y)=-W_t\circ \dots \circ W_{T-1}(-\widetilde{Y}_T)+W_t\circ \dots \circ W_{T-1}(-Y_T)
\end{align*}
Noting that $-Y_T\geq-\widetilde{Y}_T$ and using, repeatedly, the monotonicity property of $W_s$ for $s =0, \dots,T-1$ we arrive at $\phi_{t,T}(\widetilde{Y})-\phi_{t,T}(Y)\geq 0$.
Finally we note that
\begin{align*}
\phi_{t,T}(Y+m1_{[t,T]})&=-W_t\circ \dots \circ W_{T-1}(-(Y_T+m))\\
&=m-W_t\circ \dots \circ W_{T-1}(-Y_T)\\
&= \phi_{t,T}(Y)+m,
\end{align*}
where the second equality follows from the translation invariance of $W_s$ for $s=t, \dots,T-1$. Time consistency in the sense of \eqref{eq:time_consistency_delbaen} follows almost directly from the definition. 
\begin{align*}
&\phi_{t,T}(Y1_{[t,u)} +\phi_{u,T}(Y)1_{[u,T]})\\
&\quad=  -W_t\circ \dots \circ W_{T-1}(-(Y1_{[t,u)} +\phi_{u,T}(Y)1_{[u,T]})_T)\\
&\quad=  -W_t\circ \dots \circ W_{T-1}(-\phi_{u,T}(Y))\\
&\quad=  -W_t\circ \dots \circ W_{u-1}(-\phi_{u,T}(Y)),
\end{align*}
where the last equality is due to the translation invariance of $W_s$ for $s =t,\dots,T-1$ combined with the fact that $-\phi_{u,T}(Y)$ is $\calF_s$-measurable for $s\geq u$ and noting that $W_s(0)=0$ for $s =u,\dots,T-1$. From the definition \eqref{eq:CDK} of $\phi_{u,T}$ we get
\begin{align*}
& -W_t\circ \dots \circ W_{u-1}(-( -W_u\circ \dots \circ W_{T-1}(-Y_T)))\\
&\quad= -W_t\circ \dots \circ W_{T-1}(-Y_T)\\
&\quad=\phi_{t,T}(Y).
\end{align*}
\end{proof}

\subsection{Risk measures and utility functions based on conditional quantiles}\label{sec:nice_mappings}

For $Y\in L^p(\calF_{t+1})$, write $Q_{t,Y}$ for its conditional distribution given $\calF_t$: $Q_{t,Y}(\omega,\cdot)$ is a probability measure on the Borel subsets of $\R$, and $Q_{t,Y}(\cdot,A)$ is a version of $\P(Y\in A\mid\calF_t)$. We may write the conditional distribution and quantile functions of $Y$ given $\calF_t$, respectively, as
\begin{align*}
F_{t,Y}(\omega,y)&:=Q_{t,Y}(\omega,(-\infty,y]),\\
F_{t,Y}^{-1}(\omega,u)&:=\min\{y\in\R:F_{t,Y}(\omega,y)\geq u\}.
\end{align*} 
For a probability measures $M^R$ and $M^U$ on $(0,1)$, define, for $Y\in L^p(\calF_{t+1})$,
\begin{align}
R_t(Y)&:=\int_0^1F_{t,-Y}^{-1}(u)dM^R(u), \label{eq:quantileR}\\
U_t(Y)&:=\int_0^1F_{t,Y}^{-1}(u)dM^U(u).\label{eq:quantileU}
\end{align}

\begin{proposition}\label{prop:quantileRandUandW}
Suppose there exist $u_0\in (0,1)$ and $\overline{m}\in (0,\infty)$ such that, for $k=R,U$, 
\begin{align*}
\max\Big(M^k((u,v),M^k((1-v,1-u))\Big)\leq \overline{m}(v-u)\quad\text{for all } 0<u<v<u_0.
\end{align*}
Fix $p\in [1,\infty]$. \\
(i) $R_t$ in \eqref{eq:quantileR} and $U_t$ in \eqref{eq:quantileU} are well-defined as mappings from $L^p(\calF_{t+1})$ to $L^p(\calF_{t})$ and satisfy 
\eqref{eq:ti_r}-\eqref{eq:ph_r} and \eqref{eq:ti_u}-\eqref{eq:ph_u}, respectively. \\
(ii)
If $Y\in L^p(\calF_{t+1})$ and, for any Borel set $A\subset \R$ 
\begin{align*}
\P(Y\in A\mid\calF_t)=\P(Y^{(1)}+Y^{(2)}Y^{(3)}\in A\mid\calF_t), 
\end{align*}
where $Y^{(1)}\in L^p(\calF_t)$, $0<Y^{(2)}\in L^0(\calF_0)$, and $Y^{(3)}\in L^p(\calF_{t+1})$ is independent of $\calF_t$, then 
\begin{align*}
R_t(Y)&=Y^{(1)}+Y^{(2)}R_t(Y^{(3)}),\\
U_t(Y)&=Y^{(1)}+Y^{(2)}U_t(Y^{(3)}),
\end{align*}
where $R_t(Y^{(3)}),U_t(Y^{(3)})\in L^0(\calF_0)$. 
Moreover, $R_t(Y^{(3)})=R_0(\widetilde{Y}^{(3)})$ and $U_t(Y^{(3)})=U_0(\widetilde{Y}^{(3)})$ for $\widetilde{Y}^{(3)}\in L^p(\calF_1)$ with $Y^{(3)}$ and $\widetilde{Y}^{(3)}$ equally distributed.
For $W_t$ in \eqref{eq:vy}, 
\begin{align*}
W_t(Y)=Y^{(1)}+Y^{(2)}W_t(Y^{(3)}), 
\end{align*}
and if 
$0<\eta_t\in L^0(\calF_0)$, then $W_t(Y^{(3)})\in L^0(\calF_0)$. Further, if $\eta_t=\eta_0$, then $W_t(Y^{(3)})=W_0(\widetilde{Y}^{(3)})$ for $\widetilde{Y}^{(3)}\in L^p(\calF_1)$ with $Y^{(3)}$ and $\widetilde{Y}^{(3)}$ equally distributed.
\end{proposition}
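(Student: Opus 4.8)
The plan is to prove (i) and (ii) in turn. For (i) the three structural properties \eqref{eq:ti_r}--\eqref{eq:ph_r} and \eqref{eq:ti_u}--\eqref{eq:ph_u} are routine once one knows that $R_t$ and $U_t$ genuinely map into $L^p(\calF_t)$, so the real content is the well-definedness, and the hypothesis on $M^R$ and $M^U$ enters exactly there. For (ii) the engine is that $R_t(Y)$, $U_t(Y)$ and $W_t(Y)$ depend on $Y$ only through the conditional law $Q_{t,Y}(\omega,\cdot)$ (and on $\eta_t(\omega)$ for $W_t$), which lets one replace $Y$ by $Y^{(1)}+Y^{(2)}Y^{(3)}$ and then read off the affine identities from translation invariance and positive homogeneity, after which independence of $Y^{(3)}$ from $\calF_t$ collapses the relevant quantities to constants.

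I would first fix regular conditional distributions so that $(\omega,u)\mapsto F_{t,-Y}^{-1}(\omega,u)$ is jointly measurable (hence $\omega\mapsto\int_0^1F_{t,-Y}^{-1}(\omega,u)\,dM^R(u)$ is $\calF_t$-measurable by Tonelli) and so that, for nonnegative Borel $g$, $\int_0^1g\big(F_{t,-Y}^{-1}(\omega,u)\big)\,du$ is a version of $\E{g(-Y)\mid\calF_t}$; taking $g(z)=|z|^p$ gives $\E{\int_0^1|F_{t,-Y}^{-1}(\cdot,u)|^p\,du}=\E{|Y|^p}<\infty$ for $p<\infty$. Assuming without loss of generality $u_0\le 1/2$, I split $\int_0^1\cdot\,dM^R$ over $(0,u_0)$, $[u_0,1-u_0]$ and $(1-u_0,1)$. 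On the two outer ranges the bound $M^R((u,v))\le\overline{m}(v-u)$ upgrades by outer regularity to $M^R\le\overline{m}\lambda$ on all Borel subsets there, so the $M^R$-integral is at most $\overline{m}$ times the Lebesgue integral, and Minkowski's integral inequality followed by H\"older controls its $L^p$-norm by $\overline{m}\,u_0^{\,1-1/p}\E{|Y|^p}^{1/p}$. On the middle range, monotonicity of $u\mapsto F_{t,-Y}^{-1}(\omega,u)$ bounds the integrand by $|F_{t,-Y}^{-1}(\cdot,u_0)|+|F_{t,-Y}^{-1}(\cdot,1-u_0)|$, and for fixed $a\in(0,1)$ one has $|F_{t,-Y}^{-1}(\omega,a)|^p\le c_a\int_0^1|F_{t,-Y}^{-1}(\omega,u)|^p\,du$ (compare $F_{t,-Y}^{-1}(\omega,a)$ with the quantile function on a one-sided subinterval of length of order $a\wedge(1-a)$), so these terms lie in $L^p$ as well. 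Summing, $R_t(Y)\in L^p(\calF_t)$; for $p=\infty$, $|Y|\le\|Y\|_\infty$ a.s.\ forces $Q_{t,-Y}(\omega,\cdot)$ to be supported in $[-\|Y\|_\infty,\|Y\|_\infty]$ for a.e.\ $\omega$, whence $|R_t(Y)|\le\|Y\|_\infty$. The case of $U_t$ is identical, or reduces to that of $R_t$ since reflecting the spectral measure through $1/2$ preserves the hypothesis and turns $F_{t,-Y}^{-1}$ (essentially) into $-F_{t,Y}^{-1}$. Finally, \eqref{eq:ti_r}--\eqref{eq:ph_r} (resp.\ \eqref{eq:ti_u}--\eqref{eq:ph_u}) follow by integrating against $M^R$ (resp.\ $M^U$) the identities $F_{t,-(Y+\lambda)}^{-1}=F_{t,-Y}^{-1}-\lambda$ for $\lambda\in L^p(\calF_t)$, $F_{t,-Y}^{-1}\ge F_{t,-\widetilde Y}^{-1}$ whenever $Y\le\widetilde Y$ (monotonicity of conditional expectation applied to $\{-Y\le y\}$, $y$ rational), and $F_{t,-cY}^{-1}=cF_{t,-Y}^{-1}$ for $c\in L^p_+(\calF_t)$ (treating $\{c=0\}$ separately).

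For (ii), the first step is to make precise the law-invariance: if $\P(Y\in A\mid\calF_t)=\P(Y'\in A\mid\calF_t)$ a.s.\ for $A$ ranging over intervals with rational endpoints, then $Q_{t,Y}(\omega,\cdot)=Q_{t,Y'}(\omega,\cdot)$ and hence $F_{t,Y}^{-1}(\omega,\cdot)=F_{t,Y'}^{-1}(\omega,\cdot)$ for a.e.\ $\omega$, so $R_t(Y)=R_t(Y')$ and $U_t(Y)=U_t(Y')$; moreover, since $R_t(-Y)$ is $\calF_t$-measurable, the conditional law of $\big(R_t(-Y)-Y\big)_+$ given $\calF_t$ is again a functional of $Q_{t,Y}(\omega,\cdot)$, so \eqref{eq:vy} gives $W_t(Y)=W_t(Y')$ for any $\calF_t$-measurable $\eta_t$. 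Applying this with $Y'=Y^{(1)}+Y^{(2)}Y^{(3)}\in L^p(\calF_{t+1})$ — legitimate because $R_t$, $U_t$, $W_t$ are well-defined on $L^p(\calF_{t+1})$ by (i) and Proposition \ref{lem:basic_lem} — and then using \eqref{eq:ti_r} and \eqref{eq:ph_r} (resp.\ \eqref{eq:ti_u},\eqref{eq:ph_u}; resp.\ \eqref{eq:ti_w},\eqref{eq:ph_w}) yields the claimed affine formulas. Since $Y^{(3)}$ is independent of $\calF_t$, $\P(-Y^{(3)}\in A\mid\calF_t)=\P(-Y^{(3)}\in A)$ a.s., so $F_{t,-Y^{(3)}}^{-1}(\omega,u)$ does not depend on $\omega$ and $R_t(Y^{(3)})=\int_0^1F_{-Y^{(3)}}^{-1}(u)\,dM^R(u)\in\R$, finite by (i); similarly $U_t(Y^{(3)})\in\R$, and as these depend on $Y^{(3)}$ only through its law and $\calF_0$ is trivial, $R_t(Y^{(3)})=R_0(\widetilde Y^{(3)})$ and $U_t(Y^{(3)})=U_0(\widetilde Y^{(3)})$. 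Lastly, if $0<\eta_t\in L^0(\calF_0)$, then $R_t(-Y^{(3)})$ is a constant $r$, so $\big(r-Y^{(3)}\big)_+$ is a function of $Y^{(3)}$, hence independent of $\calF_t$ and in $L^p(\calF_{t+1})$, so $U_t\big((r-Y^{(3)})_+\big)\in\R$ and therefore $W_t(Y^{(3)})\in L^0(\calF_0)$; and when $\eta_t=\eta_0$, $W_t(Y^{(3)})$ depends on $Y^{(3)}$ only through its law, giving $W_t(Y^{(3)})=W_0(\widetilde Y^{(3)})$.

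The step I expect to be the main obstacle is the $L^p$-bound in (i): for an arbitrary spectral measure there is no reason $\int_0^1F_{t,-Y}^{-1}(u)\,dM^R(u)$ should be integrable even for $Y\in L^p$, because $F_{t,-Y}^{-1}$ can blow up as $u\to 0$ or $u\to 1$; the linear-growth hypothesis $M^k((u,v))\le\overline{m}(v-u)$ near the endpoints is precisely what makes the endpoint contribution comparable to the already-finite $\int_0^{u_0}\|F_{t,-Y}^{-1}(\cdot,u)\|_p\,du$. Everything else — the three axioms, the law-invariance, the affine reduction, and the collapse to constants under independence — is routine bookkeeping with conditional quantile functions.
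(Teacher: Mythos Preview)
Your proof is correct and follows the same broad strategy as the paper, but the technical execution differs in a couple of places. For the $L^p$-bound in (i), the paper sandwiches $R_t(Y)$ between $R_t(\pm|Y|)$, pulls the $p$-th power inside the $M^R$-integral via Jensen, and uses the identity $(F_{t,|Y|}^{-1})^p = F_{t,|Y|^p}^{-1}$ to reduce everything to conditional expectations of $|Y|^p$; you instead apply Minkowski's integral inequality directly to $F_{t,-Y}^{-1}$ and split $(0,1)$ into three regions, handling the middle $[u_0,1-u_0]$ explicitly via the pointwise bound $|F^{-1}(a)|^p \le (a\wedge(1-a))^{-1}\int_0^1|F^{-1}(u)|^p\,du$. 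The paper's written proof actually skips the middle region, so your three-region version is the more complete of the two. For (ii), the paper computes the conditional quantile $F_{t,Y^{(1)}+Y^{(2)}Y^{(3)}}^{-1}$ by hand and reads off the result; you instead isolate the law-invariance of $R_t$, $U_t$, $W_t$ and then invoke the already-established axioms \eqref{eq:ti_r}--\eqref{eq:ph_r}, \eqref{eq:ti_u}--\eqref{eq:ph_u}, \eqref{eq:ti_w}--\eqref{eq:ph_w} together with Proposition~\ref{lem:basic_lem}, which is a bit more modular but unwinds to the same computation. Either route works; yours has the advantage of making the role of law-invariance explicit and of cleanly closing the middle-region gap.
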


\begin{proof}[Proof of Proposition \ref{prop:quantileRandUandW}]
(i) It is sufficient to prove the statement (i) for $R_t$ and $U_t$ only for $R_t$ since the same proof, with minor modifications, applies to $U_t$.
For $p=\infty$ the statement holds without the requirement on $M^R$. We now consider $p\in [1,\infty)$.
The conditional quantile has the monotonicity property 
\begin{align*}
F_{t,Y}^{-1}(\omega,u)\leq F_{t,\widetilde{Y}}^{-1}(\omega,u)
\quad \text{if } Y\leq \widetilde{Y}
\end{align*} 
and the properties $F_{t,Y}^{-1}(\omega,u)\leq 0$ if $Y\leq 0$ and $F_{t,Y}^{-1}(\omega,u)\geq 0$ if $Y\geq 0$.
In particular, $R_t(|Y|)\leq R_t(Y)\leq R_t(-|Y|)$. We show that $R_t(|Y|),R_t(-|Y|)\in L^p(\calF_t)$.
\begin{align*}
\E{R_t(-|Y|)^p}&=\E{\Big(\int_0^1F_{t,|Y|}^{-1}(u)dM^R(u)\Big)^p}\\
&\leq \E{\int_0^1\big(F_{t,|Y|}^{-1}(u)\big)^pdM^R(u)}\\
&=\E{\int_0^{u_0}F_{t,|Y|^p}^{-1}(u)dM^R(u)}+\E{\int_{1-u_0}^1F_{t,|Y|^p}^{-1}(u)dM^R(u)}\\
&=:E_1+E_2,
\end{align*}
where the first inequality above is an application of Jensen's inequality, and the second equality follows from the fact that $g(F_Z^{-1}(u))=F_{g(Z)}^{-1}(u)$ for increasing functions $g$. Moreover,
\begin{align*}
E_1&\leq \E{F_{t,|Y|^p}^{-1}(u_0)\int_0^{u_0}dM^R(u)}\leq \E{F_{t,|Y|^p}^{-1}(u_0)},\\
E_2&\leq \overline{m}\E{\int_{1-u_0}^1F_{t,|Y|^p}^{-1}(u)du}
\leq \overline{m}\E{\int_0^1F_{t,|Y|^p}^{-1}(u)du}.
\end{align*}
Since 
\begin{align*}
\E{\int_0^1F_{t,|Y|^p}^{-1}(u)du}=\E{\E{|Y|^p\mid\calF_t}}=\E{|Y|^p}<\infty
\end{align*}
and 
\begin{align*}
\E{\int_0^1F_{t,|Y|^p}^{-1}(u)du}\geq \E{\int_{1-u_0}^1F_{t,|Y|^p}^{-1}(u)du}
\geq u_0\E{F_{t,|Y|^p}^{-1}(u_0)},
\end{align*}
$E_1,E_2<\infty$, from which $\E{|R_t(-|Y|)|^p}=\E{R_t(-|Y|)^p}<\infty$ follows.
The argument for showing $\E{|R_t(|Y|)|^p}<\infty$ is completely analogous upon writing
\begin{align*}
\E{(-R_t(|Y|))^p}&=\E{\Big(\int_0^1(-F_{t,-|Y|}^{-1}(u))dM^R(u)\Big)^p}\\
&=\E{\Big(\int_0^1F_{t,|Y|}^{-1}(1-u)dM^R(u)\Big)^p}
\end{align*}
which holds since for every $\omega$, $-F_{t,-|Y|}^{-1}(\omega,u)\neq F_{t,|Y|}^{-1}(\omega,1-u)$ for at most countably many $u\in (0,1)$.

Set $Y:=Y^{(1)}+Y^{(2)}Y^{(3)}$, where $Y^{(1)}\in L^p(\calF_t)$, $0<Y^{(2)}\in L^p(\calF_t)$, and $Y^{(3)}\in L^p(\calF_{t+1})$. Then 
\begin{align*}
F_{t,Y}(\omega,u)&:=Q_{t,Y}(\omega,(-\infty,y])\\
&=Q_{t,Y^{(3)}}(\omega,(-\infty,(y-Y^{(1)}(\omega))/Y^{(2)}(\omega)]).
\end{align*}
Therefore,
\begin{align*}
F_{t,Y}^{-1}(\omega,u)&:=\min\{y\in\R:Q_{t,Y}(\omega,(-\infty,y])\geq u\}\\
&=\min\{y\in\R:Q_{t,Y^{(3)}}(\omega,(-\infty,(y-Y^{(1)}(\omega))/Y^{(2)}(\omega)])\geq u\}\\
&=Y^{(1)}(\omega)+Y^{(2)}(\omega)\min\{y\in\R:Q_{t,Y^{(3)}}(\omega,(-\infty,y])\geq u\}\\
&=Y^{(1)}(\omega)+Y^{(2)}(\omega)F_{t,Y^{(3)}}^{-1}(\omega,u).
\end{align*}
Similarly, $F_{t,-Y}^{-1}(\omega,u)=-Y^{(1)}(\omega)+Y^{(2)}(\omega)F_{t,-Y^{(3)}}^{-1}(\omega,u)$.
It now follows from the definitions of $R_t$ and $U_t$ in \eqref{eq:quantileR} and \eqref{eq:quantileU} that the properties \eqref{eq:ti_r}-\eqref{eq:ph_r} and \eqref{eq:ti_u}-\eqref{eq:ph_u} hold.
The proof of statement (i) is complete.

(ii) Under the stronger assumption that $Y^{(1)}\in L^p(\calF_t)$, $0<Y^{(2)}\in L^0(\calF_0)$, and $Y^{(3)}\in L^p(\calF_{t+1})$ is independent of $\calF_t$,
\begin{align*}
F_{t,Y}^{-1}(\omega,u)&:=Y^{(1)}(\omega)+Y^{(2)}(\omega)F_{t,Y^{(3)}}^{-1}(\omega,u)\\
&=Y^{(1)}(\omega)+Y^{(2)}F_{Y^{(3)}}^{-1}(u)
\end{align*}
since $Y^{(2)}$ is a constant and $Y^{(3)}$ does not depend on $\calF_t$. Similarly, 
\begin{align*}
F_{t,-Y}^{-1}(\omega,u)=-Y^{(1)}(\omega)+Y^{(2)}F_{-Y^{(3)}}^{-1}(u).
\end{align*} 
It follows from the definitions of $R_t$, $U_t$ and $W_t$ in \eqref{eq:quantileR}, \eqref{eq:quantileU} and \eqref{eq:vy}, that $W_t(Y)=Y^{(1)}+Y^{(2)}W_t(Y^{(3)})$, and similarly for $R_t(Y)$ and $U_t(Y)$,
where $W_t(Y^{(3)})$ is a constant if $0<\eta_t\in L^0(\calF_0)$. 
\end{proof}

\begin{remark}
Notice that the condition on $M^R$ in Proposition \ref{prop:quantileRandUandW} holds e.g.~if $M^R$ has support in $(0,1)$ bounded away from $0$ and $1$ (the case for a conditional version of Value-at-Risk) or if $M^R$ has a bounded density (the case for a conditional version of Expected Shortfall).
\end{remark}

\begin{example}\label{ex:spectral}
Here we derive an expression for $W_t(Y_{t+1})$ for $R_t$ and $U_t$ of the form \eqref{eq:quantileR} and \eqref{eq:quantileR}, respectively.
Using well-known properties of quantile functions, we may write
\begin{align*}
&\int_0^1F_{t,(R_t(-Y_{t+1})-Y_{t+1})_+}^{-1}(u)dM^U(u)\\
&\quad=-\int_0^1F_{t,-(R_t(-Y_{t+1})-Y_{t+1})_+}^{-1}(1-u)dM^U(u)\\
&\quad=\int_0^1(R_t(-Y_{t+1})-F_{t,Y_{t+1}}^{-1}(1-u))_+dM^U(u)\\
&\quad =R_t(-Y_{t+1})\int_{1-\gamma_{t}}^1dM^U(u)
-\int_{1-\gamma_{t}}^1F_{t,Y_{t+1}}^{-1}(1-u)dM^U(u),
\end{align*}
where $\gamma_{t}:=\P(Y_{t+1}\leq R_t(-Y_{t+1})\mid\calF_t)$. Hence,
\begin{align*}
W_t(Y_{t+1})&=\Big(1-\frac{1}{1+\eta_t}\int_{1-\gamma_{t}}^1dM^U(u)\Big)\int_0^1F_{t,Y_{t+1}}^{-1}(u)dM^R(u)\\
&\quad+\frac{1}{1+\eta_t}\int_{1-\gamma_{t}}^1F_{t,Y_{t+1}}^{-1}(1-u)dM^U(u).
\end{align*}
If $dM^R(u)=m^R(u)du$ and $dM^U(u)=m^U(u)du$ for monotone integrable functions $m^R$ and $m^U$ with $m^R$ nondecreasing and $m^U$ nonincreasing, then 
\begin{align*}
W_t(Y_{t+1})&=\int_0^1F_{t,Y_{t+1}}^{-1}(u)w_t(u)du,\\
w_t(u)&:=\Big(1-\frac{1}{1+\eta_t}\int_0^{\gamma_{t}}m^U(1-u)du\Big)m^R(u)\\
&\quad+\frac{1}{1+\eta_t}m^U(1-u)1_{[0,\gamma_{t}]}(u).
\end{align*} 
Notice that $w_t$ is nonnegative but not monotone. Similarly to the argument in the proof of Theorem 4.1 in \cite{Acerbi-02}, it follows that $W_t$ does not have the subadditivity property $W_t(Y_{t+1}+\widetilde{Y}_{t+1})\leq W_t(Y_{t+1})+W_t(\widetilde{Y}_{t+1})$. However, see Proposition \ref{prop:gaussian_subadditivity} below, we may ensure subadditivity of the cost-of-capital margin by imposing restrictions on the stochastic model for the liability cash flow.
\end{example}

\section{Cost-of-capital margin for specific models}\label{sec:CoCM_spec_models}

In order to obtain stronger results we need to impose further assumptions. We will therefore assume 
conditional risk measures $R_t$ and conditional utility functions $U_t$ of the kind presented in Section \ref{sec:nice_mappings}. Moreover, we will consider flexible models that, when combined with those conditional risk measures and utility functions provide e.g.~explicit formulas for the cost-of-capital margin. More specifically, in Section \ref{sec:autoregressive} we consider residual cash flows of an autoregressive form, and in \ref{sec:gaussian} we consider a class of Gaussian models for the residual cash flows. 

\subsection{Autoregressive cash flows}\label{sec:autoregressive}

Residual cash flows that are given by an autoregressive process of order one are particularly well suited for explicit computation of the cost-of-capital margin when the dynamic monetary risk measures and dynamic monetary utility functions are of the type presented in Section \ref{sec:nice_mappings}. The autoregressive processes include residual cash flows with independent components as a special case.

\begin{proposition}\label{prop:AR_valuation}
Fix $p\in [1,\infty]$ and let $W_t$ be given by \eqref{eq:vy} with $R_t$ and $U_t$ in \eqref{eq:quantileR} and \eqref{eq:quantileU}, respectively, satisfying the condition in Proposition \ref{prop:quantileRandUandW}.
Let $(Z_t)_{t=1}^T$ be an $\filF$-adapted sequence of random variables such that, for each $t$, $Z_{t+1}\in L^p(\calF_{t+1})$ is independent of $\calF_t$.
%If $p=\infty$, suppose that $(Z_t)_{t=1}^T$ is bounded, and otherwise, for $p<\infty$, that $\E{|Z_t|^p}<\infty$ for all $t$.
Let
\begin{align*}
X_0:=0, \quad X_{t+1}:=\alpha_{t+1}X_{t}+Z_{t+1}, \quad t=0,\dots,T-1,
\end{align*}
and set
\begin{align*}
&\beta_{T}:=1,\quad
\beta_{t}:=1+\beta_{t+1}\alpha_{t+1}, \quad t\in\{1,\dots,T-1\},\\
&\delta_T:=0,\quad
\delta_{t}:=\delta_{t+1}+|\beta_{t+1}|W_{t}(\sign(\beta_{t+1})Z_{t+1}), \quad t\in\{0,\dots,T-1\}.
\end{align*}
Then, for $t=0,\dots,T-1$, $\delta_t\in L^0(\calF_0)$ and 
\begin{align}
V_{t}(X)&=\delta_{t}+\beta_{t+1}\alpha_{t+1}X_{t}\in L^p(\calF_t). \label{eq:AR_valuation}
\end{align}
In particular,
\begin{align*}
V_0(X)=\sum_{t=0}^{T-1}|\beta_{t+1}|W_{t}(\sign(\beta_{t+1})Z_{t+1}).
\end{align*}
\end{proposition}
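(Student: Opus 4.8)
The plan is to prove the representation \eqref{eq:AR_valuation} by backward induction on $t$, starting from $t=T-1$ and using the recursion $V_t(X)=W_t(X_{t+1}+V_{t+1}(X))=W_t(Y_{t+1})$ noted after Definition \ref{def:COCM}. The base case $t=T-1$ is immediate: since $V_T(X)=0$ we have $V_{T-1}(X)=W_{T-1}(X_T)=W_{T-1}(\alpha_T X_{T-1}+Z_T)$, and because $\alpha_T X_{T-1}\in L^p(\calF_{T-1})$ while $Z_T$ is independent of $\calF_{T-1}$, part (ii) of Proposition \ref{prop:quantileRandUandW} (with $Y^{(1)}=\alpha_T X_{T-1}$, $Y^{(2)}=1$, $Y^{(3)}=Z_T$) gives $V_{T-1}(X)=\alpha_T X_{T-1}+W_{T-1}(Z_T)$. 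One checks this matches $\delta_{T-1}+\beta_T\alpha_T X_{T-1}$ with $\beta_T=1$ and $\delta_{T-1}=|\beta_T|W_{T-1}(\sign(\beta_T)Z_T)=W_{T-1}(Z_T)$.

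For the inductive step, assume \eqref{eq:AR_valuation} holds at $t+1$, i.e.\ $V_{t+1}(X)=\delta_{t+1}+\beta_{t+2}\alpha_{t+2}X_{t+1}$ with $\delta_{t+1}\in L^0(\calF_0)$. Then
\begin{align*}
Y_{t+1}&=X_{t+1}+V_{t+1}(X)=\delta_{t+1}+(1+\beta_{t+2}\alpha_{t+2})X_{t+1}=\delta_{t+1}+\beta_{t+1}X_{t+1}\\
&=\delta_{t+1}+\beta_{t+1}\alpha_{t+1}X_t+\beta_{t+1}Z_{t+1},
\end{align*}
where I used the definition $\beta_{t+1}=1+\beta_{t+2}\alpha_{t+2}$. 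Now $\delta_{t+1}+\beta_{t+1}\alpha_{t+1}X_t\in L^p(\calF_t)$ and $\beta_{t+1}Z_{t+1}$ is (a deterministic multiple of) something independent of $\calF_t$. To apply Proposition \ref{prop:quantileRandUandW}(ii) I need a strictly positive constant factor, so I write $\beta_{t+1}Z_{t+1}=|\beta_{t+1}|\cdot\big(\sign(\beta_{t+1})Z_{t+1}\big)$ and take $Y^{(1)}=\delta_{t+1}+\beta_{t+1}\alpha_{t+1}X_t$, $Y^{(2)}=|\beta_{t+1}|$, $Y^{(3)}=\sign(\beta_{t+1})Z_{t+1}$; the case $\beta_{t+1}=0$ is trivial since then $Y^{(2)}$-term vanishes and one uses $W_t(0)=0$. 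This yields
\[
V_t(X)=W_t(Y_{t+1})=\delta_{t+1}+\beta_{t+1}\alpha_{t+1}X_t+|\beta_{t+1}|W_t\big(\sign(\beta_{t+1})Z_{t+1}\big)=\delta_t+\beta_{t+1}\alpha_{t+1}X_t,
\]
using the recursion defining $\delta_t$. Proposition \ref{prop:quantileRandUandW}(ii) also guarantees $W_t(\sign(\beta_{t+1})Z_{t+1})\in L^0(\calF_0)$ (here $\eta_t$ is a positive constant by hypothesis), hence $\delta_t\in L^0(\calF_0)$, and membership $V_t(X)\in L^p(\calF_t)$ follows since $X_t\in L^p(\calF_t)$ by construction and $W_t$ maps into $L^p(\calF_t)$. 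This closes the induction.

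Finally, the formula for $V_0(X)$ follows by setting $t=0$ in \eqref{eq:AR_valuation}: since $X_0=0$, the term $\beta_1\alpha_1 X_0$ vanishes and $V_0(X)=\delta_0=\sum_{t=0}^{T-1}|\beta_{t+1}|W_t(\sign(\beta_{t+1})Z_{t+1})$, which is exactly the telescoped value of the $\delta$-recursion with $\delta_T=0$. The only genuinely delicate point — and the one I would treat most carefully — is the book-keeping around the sign of $\beta_{t+1}$: Proposition \ref{prop:quantileRandUandW}(ii) requires the scaling factor $Y^{(2)}$ to be strictly positive, which is why the statement is phrased with $|\beta_{t+1}|$ and $\sign(\beta_{t+1})$ rather than just $\beta_{t+1}$, and one must handle $\beta_{t+1}=0$ (and, if one wishes to be thorough, note that $\sign(0)Z_{t+1}=0$ so $W_t(\sign(\beta_{t+1})Z_{t+1})=0$ there too, keeping the $\delta$-recursion consistent). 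Everything else is a routine matching of the two recursions.
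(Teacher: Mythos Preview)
Your proof is correct and follows essentially the same backward-induction argument as the paper: compute $Y_{t+1}=\delta_{t+1}+\beta_{t+1}\alpha_{t+1}X_t+\beta_{t+1}Z_{t+1}$, then apply Proposition~\ref{prop:quantileRandUandW}(ii) to pull out the $\calF_t$-measurable part and reduce $W_t$ to a constant acting on the independent innovation. Your treatment of the sign and the degenerate case $\beta_{t+1}=0$ is in fact slightly more explicit than the paper's; the only minor caveat is that the assumption $\eta_t\in L^0(\calF_0)$ needed for $W_t(\sign(\beta_{t+1})Z_{t+1})\in L^0(\calF_0)$ is implicit in the proposition's hypotheses (the paper uses it without comment as well).
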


\begin{proof}[Proof of Proposition \ref{prop:AR_valuation}.]
The statement is proved by induction.
First, 
\begin{align*}
V_{T-1}(X)=W_{T-1}(X_T)=\alpha_{T}X_{T-1}+W_{T-1}(Z_T)=\delta_{T-1}+\beta_{T}\alpha_{T}X_{T-1},
\end{align*}
i.e.~\eqref{eq:AR_valuation} holds for $t=T-1$.
The recursion step: take $t\in\{1,\dots,T-1\}$ and suppose that $V_{t}(X)$ is given by \eqref{eq:AR_valuation}. Then 
\begin{align*}
X_t+V_t(X)=\delta_t+\beta_tX_t=\delta_t+\beta_t\alpha_tX_{t-1}+\beta_tZ_t.
\end{align*}
From Proposition \ref{prop:quantileRandUandW} it follows that $W_{t-1}(\beta_tZ_t)$ is a constant and $X_t+V_t(X)\in L^p(\calF_t)$.
Moreover,
\begin{align*}
V_{t-1}(X)&=W_{t-1}(X_{t}+V_{t}(X))\\
&=\delta_t+\beta_t\alpha_tX_{t-1}+W_{t-1}(\beta_tZ_{t})\\
&=\delta_t+\beta_t\alpha_tX_{t-1}+|\beta_{t}|W_{t-1}(\sign(\beta_{t})Z_{t})\\
&=\delta_{t-1}+\beta_t\alpha_tX_{t-1}.
\end{align*}
In particular, $V_{t-1}(X)=W_{t-1}(X_{t}+V_{t}(X))\in L^p(\calF_{t-1})$.
We conclude that \eqref{eq:AR_valuation} holds for $t=0,\dots,T-1$.
\end{proof}

\begin{remark}
Notice that in the special case where $(X_t)_{t=1}^T$ has independent components, corresponding to $\alpha_t=0$ for all $t$, $V_t(X)=\sum_{s=t}^{T-1}W_s(X_{s+1})$. In particular, $V_t(X)\in L^0(\calF_0)$ if $\eta_t\in L^0(\calF_0)$ for all $t$.
If $\alpha_t=\alpha\in (-1,1)$ for all $t$, then 
\begin{align*}
0<\beta_{t}=\sum_{j=0}^{T-t}\alpha^j=\frac{1-\alpha^{T-t+1}}{1-\alpha}.
\end{align*}
If further $W_t(Z_{t+1})=W_0(Z_1)$ for all $t$, then $V_0(X)=f(\alpha)W_0(Z_{1})$, where
\begin{align*}
f(\alpha):=\sum_{t=1}^T\sum_{j=0}^{T-t}\alpha^j=\sum_{j=0}^{T}(T-j)\alpha^j=
\frac{\alpha^{T+1}-(T+1)\alpha+T}{(1-\alpha)^2}.
\end{align*}
\end{remark}

%\begin{remark}
%If $\alpha_t=\alpha>-1$ for all $t$ and all $\epsilon_i$ are iid and Gaussian in the sense of Section \ref{sec:gauss cash}, the above situation corresponds to the valuation of a Gaussian cash flow with a Toeplitz-type correlation matrix, i.e.\ $\Sigma_{ij} \propto \alpha^{|i-j|}$.
%\end{remark}

\subsection{Gaussian cash flows}\label{sec:gaussian}

The convenient properties of conditional distributions of multivariate normal distributions allow for much  stronger results than what have been possible in the setting considered so far. In what follows, we will therefore derive properties of the cost-of-capital margin in a Gaussian setting. Since the cost-of-capital margin is primarily intended for aggregate cash flows, Gaussian model assumptions 
will in many situations provide a reasonable approximation.

%Notice that Gaussian model assumptions require conditional risk measures $R_t$ defined on $L^p(\calF_{t+1})$ for $p\in [0,\infty)$, i.e.~excluding $p=\infty$.

\begin{definition}\label{def:gaussian_filtration}
Let $\Gamma$ be a finite set of Gaussian vectors in $\R^T$ that are jointly Gaussian. Let  
\begin{align*}
\calG_0:=\{\emptyset,\Omega\},\quad
\calG_t:=\Big(\vee_{Z\in\Gamma}\sigma(Z_t)\Big)\vee \calG_{t-1} \quad \textrm{for } t=1,\dots,T.
\end{align*}
%There exist $\epsilon\in\Gamma$ such that $\{\epsilon\}$ and $\Gamma\setminus\{\epsilon\}$ are independent, and such that the components of $\epsilon$ are independent and standard normal.
$\filG:=(\calG_t)_{t=0}^T$ is called a Gaussian filtration, and, 
if $X\in \Gamma$, then $(X,\filG)$ is called a Gaussian model.
\end{definition}

For a Gaussian model $(X,\filG)$, $X$ is interpreted as a cash flow that may be assigned a value and $\filG$ represents the flow of information used in the valuation of $X$.  
Notice that by Proposition \ref{prop:basic_prop} it is sufficient to only consider zero mean Gaussian cash flows $X$.

Consider a Gaussian model $(X,\filG)$ and let 
\begin{align}\label{eq:gaussianY}
Y=a_0+\sum_{Z\in\Gamma}\sum_{s=1}^Ta^Z_sZ_s \text{ for some } a_0\in \R, a^Z_s\in\R. 
\end{align}
Then, the conditional distribution of $Y$ given $\calG_t$ is given by 
\begin{align}\label{eq:gaussian_conditional_distribution}
\P(Y\in A\mid\calG_t)%&=Q(\{x\in\R:\E{Y\mid\calG_t}+\Var(Y\mid\calG_t)^{1/2}x\in A\})\\
&=\P(\E{Y\mid\calG_t}+\Var(Y\mid\calG_t)^{1/2}\epsilon_{t+1}\in A\mid\calG_t),
\end{align}  
where 
%$Q$ is the standard normal distribution and 
$\epsilon_{t+1}$ is $\calG_{t+1}$-measurable, standard normal and independent of $\calG_t$. Moreover,
\begin{align}\label{eq:gaussian_ceYgGt}
\E{Y\mid\calG_t}=b_0+\sum_{Z\in\Gamma}\sum_{s=1}^tb^Z_sZ_s
\text{ for some } b_0\in \R, b^Z_s\in\R,
\end{align}
and, a special feature of conditional Gaussian distributions that is essential here, $\Var(Y\mid\calG_s)\in L^0_+(\calG_0)$. 
These properties ensure that if the mapping $W_t:L^p(\calG_{t+1})\to L^p(\calG_t)$ given by \eqref{eq:vy},  with $\filF:=\filG$ and $\eta_t:=\eta_0$, satisfying the assumptions in Proposition \ref{prop:quantileRandUandW}, and if $Y\in L^p(\calG_{t+1})$ is of the form \eqref{eq:gaussianY}, then, by Proposition \ref{prop:quantileRandUandW} and \eqref{eq:gaussian_ceYgGt}, 
for every $p\in [1,\infty)$,
\begin{align*}
W_t(Y)&=\E{Y\mid\calG_t}+\Var(Y\mid\calG_t)^{1/2}W_t(\epsilon_{t+1})\\
&=\E{Y\mid\calG_t}+\Var(Y\mid\calG_t)^{1/2}W_0(\epsilon_{1}).
\end{align*} 
%From \eqref{eq:gaussian_ceYgGt} follows that $W_t(Y)\in L^p(\calG_t)$ for every $p\in [0,\infty)$.

\begin{assumption}\label{assumptionG}
For a zero mean Gaussian model $(X,\filG)$ and $p\in [1,\infty)$, $(W_t)_{t=0}^{T-1}$ is a sequence of mappings $W_t:L^p(\calG_{t+1})\to L^p(\calG_t)$ as in Proposition \ref{prop:quantileRandUandW} with $\filF:=\filG$ and $\eta_t:=\eta_0>0$. Moreover, $\epsilon_1$ is $\calG_1$-measurable and standard normal.
% and $\epsilon_{t+1}$ $\calG_{t+1}$-measurable, standard normal and independent of $\calG_t$, it holds that $0<W_0(\epsilon_1)=W_t(\epsilon_{t+1})$.
\end{assumption}
%The assumption that $W_0(\epsilon_1)=W_t(\epsilon_{t+1})$ is not essential and made purely for convenience as it leads to more explicit expressions below.

\begin{proposition}\label{prop:gaussian_value}
Let $(X,\filG)$ be a zero mean Gaussian model and suppose that Assumption \ref{assumptionG} holds.
%, for $t\in\{0,\dots,T-1\}$, let $W_t\in L^0_{t+1,t}(\filG)$ satisfy \eqref{eq:law_invariance} and \eqref{eq:independence_prop}.
Then, for $t\in\{0,\dots,T-1\}$,
\begin{align*}
V_{t,\filG}(X)=\E{\sum_{s={t+1}}^TX_s\mid\calG_t}
+\sum_{s={t+1}}^T\Var\Big(\E{\sum_{u=s}^{T}X_u\mid\calG_s}\mid\calG_{s-1}\Big)^{1/2}W_0(\epsilon_1).
\end{align*}
Moreover, 
\begin{align*}
V_{0,\filG}(X)&=\sum_{s=1}^{T}\Big(\Var\Big(\sum_{u=s}^{T}X_u\mid\calG_{s-1}\Big)
-\Var\Big(\sum_{u=s}^{T}X_u\mid\calG_{s}\Big)\Big)^{1/2}W_0(\epsilon_1).
\end{align*}
\end{proposition}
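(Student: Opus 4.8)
The strategy is an induction on $t$ running backwards from $T-1$ to $0$, using the key representation of $W_t$ on Gaussian random variables established just before Assumption \ref{assumptionG}, namely that if $Y\in L^p(\calG_{t+1})$ has the form \eqref{eq:gaussianY}, then
\begin{align*}
W_t(Y)=\E{Y\mid\calG_t}+\Var(Y\mid\calG_t)^{1/2}W_0(\epsilon_1).
\end{align*}
The inductive claim is that $Y_{t+1}=X_{t+1}+V_{t+1}(X)$ is, for each $t$, of the form \eqref{eq:gaussianY} — that is, an affine function of the $Z_s$ with deterministic coefficients — so that the representation above applies; the explicit expression for $V_{t,\filG}(X)$ then follows by unwinding the recursion $V_{t,\filG}(X)=W_t(Y_{t+1})$.

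\textbf{Base case and inductive step.} For $t=T-1$ we have $V_{T,\filG}(X)=0$, so $Y_T=X_T$, which is of the form \eqref{eq:gaussianY}; hence $V_{T-1,\filG}(X)=W_{T-1}(X_T)=\E{X_T\mid\calG_{T-1}}+\Var(X_T\mid\calG_{T-1})^{1/2}W_0(\epsilon_1)$, matching the claimed formula at $t=T-1$. For the step, suppose the formula holds at $t$; then $V_{t,\filG}(X)=\E{\sum_{s=t+1}^T X_s\mid\calG_t}+(\text{a }\calG_0\text{-constant})$, and by \eqref{eq:gaussian_ceYgGt} the conditional expectation $\E{\sum_{s=t+1}^T X_s\mid\calG_t}$ is affine in $Z_1,\dots,Z_t$ with deterministic coefficients. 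Adding $X_t$ (itself a coordinate of a vector in $\Gamma$, hence of the form \eqref{eq:gaussianY}) shows $Y_t=X_t+V_{t,\filG}(X)$ is of the form \eqref{eq:gaussianY}, so $W_{t-1}$ acts on it via the displayed representation. Writing $Y_t = \E{\sum_{s=t}^T X_s\mid\calG_t}+c$ with $c\in L^0(\calG_0)$, translation invariance of $W_{t-1}$ (property \eqref{eq:ti_w}) peels off $c$, and the representation gives
\begin{align*}
V_{t-1,\filG}(X)=\E{\textstyle\sum_{s=t}^T X_s\mid\calG_{t-1}}+\Var\Big(\E{\textstyle\sum_{s=t}^T X_s\mid\calG_t}\mid\calG_{t-1}\Big)^{1/2}W_0(\epsilon_1)+c,
\end{align*}
where one uses the tower property $\E{\E{\cdot\mid\calG_t}\mid\calG_{t-1}}=\E{\cdot\mid\calG_{t-1}}$ and the fact that $\Var(\E{\cdot\mid\calG_t}\mid\calG_{t-1})$ absorbs the correct new term. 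Tracking $c$ through the recursion reproduces exactly the sum $\sum_{s=t}^T\Var(\E{\sum_{u=s}^T X_u\mid\calG_s}\mid\calG_{s-1})^{1/2}W_0(\epsilon_1)$, completing the induction.

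\textbf{The telescoping identity.} For the final formula for $V_{0,\filG}(X)$, set $S_s:=\sum_{u=s}^T X_u$ and observe that in the $t=0$ case the conditional-expectation term vanishes (since $\calG_0$ is trivial and $X$ has zero mean), leaving $V_{0,\filG}(X)=\sum_{s=1}^T\Var(\E{S_s\mid\calG_s}\mid\calG_{s-1})^{1/2}W_0(\epsilon_1)$. It remains to show, for each $s$, the identity
\begin{align*}
\Var\big(\E{S_s\mid\calG_s}\mid\calG_{s-1}\big)=\Var(S_s\mid\calG_{s-1})-\Var(S_s\mid\calG_s).
\end{align*}
This is the conditional law of total variance applied at the two nested $\sigma$-algebras $\calG_{s-1}\subseteq\calG_s$: $\Var(S_s\mid\calG_{s-1})=\E{\Var(S_s\mid\calG_s)\mid\calG_{s-1}}+\Var(\E{S_s\mid\calG_s}\mid\calG_{s-1})$, combined with the crucial Gaussian fact noted in the text that $\Var(S_s\mid\calG_s)\in L^0(\calG_0)$ is deterministic, so $\E{\Var(S_s\mid\calG_s)\mid\calG_{s-1}}=\Var(S_s\mid\calG_s)$. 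Substituting and taking square roots yields the claim.

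\textbf{Main obstacle.} The routine parts are the algebra of conditional expectations and the law of total variance; the genuine subtlety, and the step I would be most careful about, is the bookkeeping in the inductive step — verifying that $Y_t=X_t+V_{t,\filG}(X)$ genuinely has the affine form \eqref{eq:gaussianY} with \emph{deterministic} (not merely $\calG_t$-measurable) coefficients on the $Z_s$, so that Proposition \ref{prop:quantileRandUandW}(ii) applies with $Y^{(2)}=\Var(Y_t\mid\calG_{t-1})^{1/2}\in L^0(\calG_0)$ and $Y^{(3)}=\epsilon_t$ genuinely independent of $\calG_{t-1}$ with $W_t(\epsilon_t)=W_0(\epsilon_1)$ since $\eta_t=\eta_0$. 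This rests entirely on the special Gaussian structure — that conditional variances are constants — which is why the result fails for general models, and I would make sure every invocation of "$\Var(\cdot\mid\calG_s)$ is a constant" is justified by \eqref{eq:gaussian_ceYgGt} and the remark following it.
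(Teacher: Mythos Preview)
Your proposal is correct and follows essentially the same approach as the paper: backward induction on $t$, verifying at each step that $Y_t=X_t+V_{t,\filG}(X)$ is of the affine Gaussian form \eqref{eq:gaussianY} so that $W_{t-1}(Y_t)=\E{Y_t\mid\calG_{t-1}}+\Var(Y_t\mid\calG_{t-1})^{1/2}W_0(\epsilon_1)$, then applying the conditional variance decomposition together with the fact that Gaussian conditional variances are deterministic to obtain the telescoping form for $V_{0,\filG}(X)$. Your identification of the key subtlety—that the coefficients in the affine representation remain deterministic (not merely $\calG_t$-measurable) so that Proposition~\ref{prop:quantileRandUandW}(ii) applies—is exactly the point the paper relies on as well.
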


Notice that, given the assumptions of Proposition \ref{prop:gaussian_value}, we may express the cost-of-capital margin $V_{0,\filG}(X)$ as 
\begin{align*}
\sum_{s=1}^T\Var\Big(X_s+\E{\sum_{u=s+1}^{T}X_u\mid\calG_s}-\E{\sum_{u=s}^{T}X_u\mid\calG_{s-1}}\mid\calG_{s-1}\Big)^{1/2}W_0(\epsilon_1).
\end{align*}
In particular, the cost-of-capital margin $V_{0,\filG}(X)$ is proportional to the sum of the conditional standard deviations of the errors of the repeated predictions of the sum of the remaining cash flows. 

\begin{proof}[Proof of Proposition \ref{prop:gaussian_value}.]
The statement is proved by induction. Let $c:=W_0(\epsilon_1)$ and $V_t(X):=V_{t,\filG}(X)$.
Clearly, $X_T$ is of the form \eqref{eq:gaussianY} so \eqref{eq:gaussian_conditional_distribution} holds. Therefore, from statement (iii) in Proposition \ref{prop:quantileRandUandW},
\begin{align*}
V_{T-1}(X):=W_{T-1}(X_T)=\E{X_T\mid\calG_{T-1}}+\Var(X_T\mid\calG_{T-1})^{1/2}c.
\end{align*}
Now let $t\leq T-1$ and assume that the expression for $V_t(X)$ holds for $t$. Then $X_t+V_t(X)$ is of the form \eqref{eq:gaussianY} so \eqref{eq:gaussian_conditional_distribution} holds. Therefore, from statement (iii) in Proposition \ref{prop:quantileRandUandW},
\begin{align*}
V_{t-1}(X)&:=W_{t-1}(X_t+V_t(X))\\
&=\E{X_t+V_t(X)\mid\calG_{t-1}}+\Var(X_t+V_t(X)\mid\calG_{t-1})^{1/2}c\\
&=\E{\sum_{s=t}^TX_s\mid\calG_{t-1}}
+\sum_{s={t+1}}^T\Var\Big(\E{\sum_{u=s}^{T}X_u\mid\calG_s}\mid\calG_{s-1}\Big)^{1/2}c\\
&\quad\quad+\Var\Big(X_t+\E{\sum_{s={t+1}}^TX_s\mid\calG_t}\mid\calG_{t-1}\Big)^{1/2}c\\
&=\E{\sum_{s={t}}^TX_s\mid\calG_t}
+\sum_{s={t}}^T\Var\Big(\E{\sum_{u=s}^{T}X_u\mid\calG_s}\mid\calG_{s-1}\Big)^{1/2}c.
\end{align*}
Recall the variance decomposition formula:
for $\calF\subset\calG\subset\calH$ and $Y\in L^2(\calH)$,
\begin{align*}
\Var(Y\mid\calF)=\E{\Var(Y\mid\calG)\mid\calF}+\Var(\E{Y\mid\calG}\mid\calF).
\end{align*}
Applying the variance decomposition formula and using the fact that in the Gaussian case the conditional variance is a constant, we find that, for $s<T$, 
\begin{align*}
&\Var\Big(\E{\sum_{u=s}^{T}X_u\mid\calG_s}\mid\calG_{s-1}\Big)\\
&\quad=\Var\Big(\sum_{u=s}^{T}X_u\mid\calG_{s-1}\Big)-\E{\Var\Big(\sum_{u=s}^{T}X_u\mid\calG_s\Big)\mid\calG_{s-1}}\\
&\quad=\Var\Big(\sum_{u=s}^{T}X_u\mid\calG_{s-1}\Big)-\Var\Big(\sum_{u=s}^{T}X_u\mid\calG_s\Big).
\end{align*}
\end{proof}

Computation of $W_0(\epsilon_1)$ is illustrated in the following example.

\begin{example}\label{ex:W0normal}
Let $\epsilon_1$ be standard normal with distribution and density function $\Phi$ and $\phi$, respectively, and let $U_0(\cdot)=\E{\cdot}$. Then
%Then, from \eqref{eq:W0_formula}, 
\begin{align*}
W_0(\epsilon_1)
&=R_0(-\epsilon_1)-\frac{1}{1+\eta_0}\E{(R_0(-\epsilon_1)-\epsilon_1)1_{\{\epsilon_1\leq R_0(-\epsilon_1)\}}}\\
&=\Big(R_0(-\epsilon_1)-\frac{1}{1+\eta_0}\Big(R_0(-\epsilon_1)\Phi(R_0(-\epsilon_1))+\phi(R_0(-\epsilon_1))\Big)\Big)\\
&\leq R_0(-\epsilon_1)\frac{\eta_0}{1+\eta_0},
\end{align*}
where the inequality is due to the Mill's ratio inequalities $\phi(x)x/(1+x^2)\leq 1-\Phi(x)\leq \phi(x)/x$ for $x>0$, see e.g.~\cite{Gordon-41}.
Notice that if $R_0=\VaR_p$, then $R_0(-\epsilon_1)=\Phi^{-1}(1-p)$, and for Expected Shortfall, 
$R_0=\ES_p$, corresponding to $m^R(u)=p^{-1}1_{[1-p,1]}(u)$ in Example \ref{ex:spectral}, $R_0(-\epsilon_1)=p^{-1}\phi(\Phi^{-1}(1-p))$.
\end{example}

Bounds on the cost-of-capital margin can be obtained from Proposition \ref{prop:gaussian_value}.
From the proof of Proposition \ref{prop:gaussian_value}, notice that 
\begin{align}\label{eq:gauss_temp}
V_{t,\filG}(X)-\E{\sum_{s={t+1}}^TX_s\mid\calG_t}=W_0(\epsilon_1)\sum_{s=t+1}^Ta_s^{1/2},
\end{align}
where, for $s=1,\dots,T$,
\begin{align}\label{eq:as}
a_s:=\Var\Big(\sum_{u=t+1}^{T}X_u\mid\calG_{s-1}\Big)-\Var\Big(\sum_{u=t+1}^{T}X_u\mid\calG_s\Big).
\end{align}
In particular,
%for $a_s\geq 0$ such that 
\begin{align*}
\sum_{s=t+1}^T a_s=\Var\Big(\sum_{s=t+1}^TX_s\mid\calG_t\Big).
\end{align*}
An upper bound on the left-hand side in \eqref{eq:gauss_temp} is found by solving a standard convex optimization problem:
\begin{align*}
\text{maximize } & \sum_{s=t+1}^{T}a_s^{1/2}\\
\text{subject to } & \sum_{s=t+1}^{T}a_s=C, \quad 
a_s\geq 0 
\end{align*}
The concave objective function has a unique maximum for $a_s=C/(T-t)$ for all $s$. Minimizing the objective function over the same (convex) set gives a minimum for $a_{s_0}=C$ for some $s_0\in\{t+1,\dots,T\}$ and $a_s=0$ for $s\neq s_0$. 
We have thus proved the following bounds.

\begin{proposition}\label{prop:gaussian_bounds}
Let $(X,\filG)$ be a zero mean Gaussian model and suppose that Assumption \ref{assumptionG} holds.
Then, 
\begin{align*}
&W_0(\epsilon_1)\Var\Big(\sum_{s=t+1}^TX_s\mid\calG_t\Big)^{1/2}\leq V_{t,\filG}(X)-\E{\sum_{s={t+1}}^TX_s\mid\calG_t}\\
&\quad\leq W_0(\epsilon_1)(T-t)^{1/2}
\Var\Big(\sum_{s=t+1}^TX_s\mid\calG_t\Big)^{1/2}.
\end{align*}
In particular, 
\begin{align*}
W_0(\epsilon_1)\Var\Big(\sum_{s=1}^TX_s\Big)^{1/2}\leq V_{0,\filG}(X)
\leq W_0(\epsilon_1)T^{1/2}
\Var\Big(\sum_{s=1}^TX_s\Big)^{1/2}.
\end{align*}
\end{proposition}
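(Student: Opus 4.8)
The plan is to derive the stated bounds directly from Proposition \ref{prop:gaussian_value}, whose proof already isolates the key identity. Recall from \eqref{eq:gauss_temp} and \eqref{eq:as} that
\begin{align*}
V_{t,\filG}(X)-\E{\sum_{s=t+1}^TX_s\mid\calG_t}=W_0(\epsilon_1)\sum_{s=t+1}^Ta_s^{1/2},
\end{align*}
where $a_s\geq 0$ (nonnegativity follows because conditioning on a finer $\sigma$-algebra cannot increase the conditional variance, together with the fact that in the Gaussian setting these conditional variances are deterministic, so the comparison is between constants) and $\sum_{s=t+1}^Ta_s=\Var\big(\sum_{s=t+1}^TX_s\mid\calG_t\big)=:C$. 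So the whole statement reduces to bounding $\sum_{s=t+1}^Ta_s^{1/2}$ subject to $a_s\geq 0$ and $\sum a_s=C$.

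Next I would dispatch the two elementary extremal problems. For the lower bound, since $x\mapsto x^{1/2}$ is concave with $\sqrt{0}=0$, subadditivity gives $\sum_{s=t+1}^Ta_s^{1/2}\geq\big(\sum_{s=t+1}^Ta_s\big)^{1/2}=C^{1/2}$ (equivalently, the minimum of the concave objective over the simplex is attained at a vertex $a_{s_0}=C$, $a_s=0$ otherwise). For the upper bound, concavity of $x\mapsto x^{1/2}$ and Jensen (or Cauchy--Schwarz applied to the vectors $(a_s^{1/2})_s$ and $(1)_s$) give $\sum_{s=t+1}^Ta_s^{1/2}\leq(T-t)^{1/2}\big(\sum_{s=t+1}^Ta_s\big)^{1/2}=(T-t)^{1/2}C^{1/2}$, with equality iff all $a_s$ are equal. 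Multiplying through by $W_0(\epsilon_1)$ (which is nonnegative; indeed Example \ref{ex:W0normal} shows $0\leq W_0(\epsilon_1)\leq R_0(-\epsilon_1)\eta_0/(1+\eta_0)$, and nonnegativity also follows from $W_0(\epsilon_1)=W_0\circ\cdots\circ W_0$ applied appropriately via translation invariance and $W_t(0)=0$) yields the first display of the proposition.

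The special case $t=0$ is then immediate: $\calG_0=\{\emptyset,\Omega\}$ so $\E{\sum_{s=1}^TX_s\mid\calG_0}=0$ (the model is zero mean) and $\Var\big(\sum_{s=1}^TX_s\mid\calG_0\big)=\Var\big(\sum_{s=1}^TX_s\big)$, and substituting $t=0$ into the general bounds gives the second display. I do not anticipate a genuine obstacle here; the only point requiring a word of care is the justification that each $a_s$ is nonnegative and deterministic, which rests on the ``conditional variance is a constant'' feature of the Gaussian filtration already invoked in the proof of Proposition \ref{prop:gaussian_value} and on monotonicity of conditional variance under refinement of the conditioning $\sigma$-algebra (via the variance decomposition formula $\Var(Y\mid\calG_{s-1})=\E{\Var(Y\mid\calG_s)\mid\calG_{s-1}}+\Var(\E{Y\mid\calG_s}\mid\calG_{s-1})$, whose last term is nonnegative). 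Everything else is the two one-line convex-optimization bounds already sketched in the text preceding the proposition.
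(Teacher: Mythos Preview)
Your proposal is correct and follows essentially the same approach as the paper: both reduce the bounds to the extremal problem for $\sum_s a_s^{1/2}$ over the simplex $\{a_s\geq 0,\ \sum_s a_s=C\}$ and resolve it via concavity of the square root (the paper phrases this as a convex optimization problem, you phrase it via subadditivity and Cauchy--Schwarz/Jensen, which is the same content). Your explicit remark that the inequalities require $W_0(\epsilon_1)\geq 0$ is a useful point the paper leaves implicit, though your second parenthetical justification for this sign (``$W_0(\epsilon_1)=W_0\circ\cdots\circ W_0$ applied appropriately'') does not actually make sense and should be dropped.
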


The interpretation of the upper bound for the cost-of-capital margin $V_{0,\filG}(X)$ in Proposition \ref{prop:gaussian_bounds} is as follows.
The Gaussian model $(X,\filG)$ maximizing $V_{0,\filG}(X)$ corresponds to a filtration $\filG$ such that   
\begin{align*}
\Var\Big(\sum_{u=1}^{T}X_u\mid\calG_{s-1}\Big)-\Var\Big(\sum_{u=1}^{T}X_u\mid\calG_s\Big)
=\frac{1}{T}\Var\Big(\sum_{u=1}^{T}X_u\Big),
\end{align*}
i.e.~the uncertainty (variance) in the remaining cash flow is distributed evenly over the length of the cash flow.

The interpretation of the lower bound in Proposition \ref{prop:gaussian_bounds} is as follows. Given a zero mean Gaussian cash flow $X$, such that $\Var(X_1)>0$, the Gaussian model $(X,\filG)$ minimizing $V_{0,\filG}(X)$ corresponds to a filtration $\filG$ such that 
\begin{align*}
\Var\Big(\sum_{u=1}^{T}X_u\mid\calG_{s}\Big)=0\quad \text{for } s\geq 1
\end{align*}
i.e.~the cash flow after time $1$ is completely known at time $1$. In particular, after time $1$ there is no need for capital funds and hence there are no capital costs. This interpretation follows from Proposition \ref{prop:early_info} below.

\begin{proposition}\label{prop:early_info}
Let $(X,\filG)$ and $(X,\widetilde{\filG})$ be two Gaussian models such that, for every $t$,
$\widetilde{\calG}_t=\calG_s$ for some $s\geq t$. 
Suppose further that, for both Gaussian models, Assumption \ref{assumptionG} holds.
Then $V_{0,\filG}(X)\geq V_{0,\widetilde{\filG}}(X)$.
\end{proposition}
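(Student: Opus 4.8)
The plan is to reduce the claim to subadditivity of $x\mapsto\sqrt{x}$, using the explicit formula of Proposition \ref{prop:gaussian_value} rewritten through a single martingale. First I would record a time change linking the two filtrations: by hypothesis, for each $t$ there is $s\ge t$ with $\widetilde{\calG}_t=\calG_s$, so setting $\tau(t):=\min\{s\ge t:\widetilde{\calG}_t=\calG_s\}$ yields a map with $\tau(t)\ge t$, $\tau(0)=0$ (since $\widetilde{\calG}_0=\{\emptyset,\Omega\}=\calG_0$) and $\tau(T)=T$ (since $\widetilde{\calG}_T=\calF=\calG_T$). It is nondecreasing: if $\tau(t)>\tau(t+1)$ then $\calG_{\tau(t)}=\widetilde{\calG}_t\subseteq\widetilde{\calG}_{t+1}=\calG_{\tau(t+1)}\subseteq\calG_{\tau(t)}$, forcing $\calG_{\tau(t+1)}=\widetilde{\calG}_t$ with $\tau(t+1)\ge t$, which contradicts minimality of $\tau(t)$. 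Hence $\widetilde{\calG}_t=\calG_{\tau(t)}$ for all $t$.

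Next I would reformulate the margin via the martingale $A_k:=\E{\sum_{u=1}^TX_u\mid\calG_k}$, $k=0,\dots,T$, which satisfies $A_0=0$ (zero mean) and $A_T=\sum_{u=1}^TX_u$. A short computation shows its increment $\Delta A_s:=A_s-A_{s-1}$ equals $X_s+\E{\sum_{u=s+1}^TX_u\mid\calG_s}-\E{\sum_{u=s}^TX_u\mid\calG_{s-1}}$, i.e.\ precisely the random variable inside the conditional variances in the expression for $V_{0,\filG}(X)$ displayed just after Proposition \ref{prop:gaussian_value}. Each $\Delta A_s$ is of the form \eqref{eq:gaussianY} and $\E{\Delta A_s\mid\calG_{s-1}}=0$, so $\Var(\Delta A_s\mid\calG_{s-1})$ is the constant $v_s:=\E{(\Delta A_s)^2}\ge0$, and
\begin{align*}
V_{0,\filG}(X)=W_0(\epsilon_1)\sum_{s=1}^Tv_s^{1/2}.
\end{align*}
Running the same construction for $\widetilde{\filG}$ is legitimate since Assumption \ref{assumptionG} holds there as well, and the constant $W_0(\epsilon_1)$ is unchanged by Proposition \ref{prop:quantileRandUandW}; it gives $\widetilde{A}_s:=\E{\sum_{u=1}^TX_u\mid\widetilde{\calG}_s}=A_{\tau(s)}$, hence $\Delta\widetilde{A}_s=\sum_{j=\tau(s-1)+1}^{\tau(s)}\Delta A_j$ and, by orthogonality of the increments of $A$, $\widetilde{v}_s:=\E{(\Delta\widetilde{A}_s)^2}=\sum_{j\in B_s}v_j$ with $B_s:=\{\tau(s-1)+1,\dots,\tau(s)\}$. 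Since $\tau$ is nondecreasing with $\tau(0)=0$ and $\tau(T)=T$, the (possibly empty) sets $B_1,\dots,B_T$ partition $\{1,\dots,T\}$.

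It then follows from subadditivity of $\sqrt{\cdot}$ on $[0,\infty)$ that $\big(\sum_{j\in B_s}v_j\big)^{1/2}\le\sum_{j\in B_s}v_j^{1/2}$, hence $\sum_{s=1}^T\widetilde{v}_s^{1/2}\le\sum_{j=1}^Tv_j^{1/2}$, and multiplying by $W_0(\epsilon_1)\ge0$ gives $V_{0,\widetilde{\filG}}(X)\le V_{0,\filG}(X)$. The crux is the martingale reformulation: once $V_{0,\filG}(X)$ is seen as proportional to $\sum_sv_s^{1/2}$ for the Doob martingale $A$ of $\sum_uX_u$, the hypothesis on $\widetilde{\filG}$ merely says that $A$ is subsampled, orthogonality of its increments turns this subsampling into a grouping of the $v_j$'s, and concavity of the square root finishes it. The Gaussian hypothesis is used exactly to make each $\Var(\Delta A_s\mid\calG_{s-1})$ deterministic, so that squared $L^2$-norms add and match the coefficients of Proposition \ref{prop:gaussian_value}; the nonnegativity of $W_0(\epsilon_1)$ is the one sign input and belongs to the standing assumptions of this section (cf.\ Example \ref{ex:W0normal}).
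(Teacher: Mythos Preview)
Your proof is correct and rests on the same two ingredients as the paper's: the formula $V_{0,\filG}(X)=W_0(\epsilon_1)\sum_s a_s^{1/2}$ from Proposition~\ref{prop:gaussian_value} (your $v_s$ coincides with the paper's $a_s$ in \eqref{eq:as}) and the subadditivity of $x\mapsto\sqrt{x}$. The difference is purely organizational. The paper reduces to a single one-step shift $\widetilde{\calG}_{t_0}=\calG_{t_0+1}$ (all other $\widetilde{\calG}_t=\calG_t$) and then says ``repeating the argument yields the conclusion''; you instead handle the general case in one pass by introducing the nondecreasing time change $\tau$ and the block partition $\{B_s\}$ of $\{1,\dots,T\}$, so that $\widetilde{v}_s=\sum_{j\in B_s}v_j$ and a single application of $\sqrt{\cdot}$-subadditivity finishes it. Your packaging is a bit cleaner in that it avoids the informal iteration; the martingale language for $A_k$ is a convenient device but not essential, since orthogonality of the increments of $A$ is exactly the variance-decomposition identity the paper uses. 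Both arguments tacitly need $W_0(\epsilon_1)\ge 0$, which the paper also relies on in the last inequality of its proof.
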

\begin{proof}
It is sufficient to consider the case $\widetilde{\calG}_{t_0}=\calG_{t_0+1}$ for some $t_0\geq 1$, and $\widetilde{\calG}_{t}=\calG_{t}$ for $t\neq t_0$. Repeating the argument then yields the conclusion.
Set $b_t:=\Var(\sum_{u=1}^TX_u\mid\calG_t)$, $\widetilde{b}_t:=\Var(\sum_{u=1}^TX_u\mid\widetilde{\calG}_t)$, $a_t:=b_{t-1}-b_t$ and $\widetilde{a}_t:=\widetilde{b}_{t-1}-\widetilde{b}_t$. Then 
\begin{align*}
V_{0,\filG}(X)-V_{0,\widetilde{\filG}}(X)&=W_0(\epsilon_1)\sum_{t=1}^T(a_t^{1/2}-\widetilde{a}_t^{1/2})\\
&=W_0(\epsilon_1)\Big(a_{t_0}^{1/2}+a_{t_0+1}^{1/2}-\widetilde{a}_{t_0}^{1/2}-\widetilde{a}_{t_0+1}^{1/2}\Big)\\
&=W_0(\epsilon_1)\Big(a_{t_0}^{1/2}+a_{t_0+1}^{1/2}-0-(a_{t_0}+a_{t_0+1})^{1/2}\Big)\\
&\geq 0
\end{align*}
due to the subadditivity of $\R_+\ni x\mapsto x^{1/2}\in \R_+$.
\end{proof}

In the Gaussian setting, the cost-of-capital margin is subadditive. If the aggregate liability cash flow is decomposed into a sum of sub-liability cash flows, then the sum of the corresponding cost-of-capital margins dominates the cost-of-capital margin for the aggregate liability cash flow. 

\begin{proposition}\label{prop:gaussian_subadditivity}
Let $((X,\widetilde{X}),\filG)$ be a Gaussian model and
suppose that Assumption \ref{assumptionG} holds.
Then, for $t\in\{0,\dots,T-1\}$,
$V_{t,\filG}(X+\widetilde{X})\leq V_{t,\filG}(X)+V_{t,\filG}(\widetilde{X})$.
\end{proposition}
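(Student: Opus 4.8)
The plan is to read the claim off directly from the explicit formula for $V_{t,\filG}$ in Proposition \ref{prop:gaussian_value}. Two preliminary reductions are convenient. First, by translation invariance (Proposition \ref{prop:basic_prop}(i)) we may subtract the deterministic mean vectors of $X$ and $\widetilde{X}$ and assume that $X$ and $\widetilde{X}$ are zero mean; the mean terms enter $V_{t,\filG}(X+\widetilde{X})$, $V_{t,\filG}(X)$ and $V_{t,\filG}(\widetilde{X})$ additively and hence cancel from the asserted inequality. Second, since $X$, $\widetilde{X}$ are (jointly) of the linear form \eqref{eq:gaussianY} with respect to $\filG$, so is $X+\widetilde{X}$, and therefore Proposition \ref{prop:gaussian_value} (more precisely, its proof, which uses only that the argument is of the form \eqref{eq:gaussianY}) applies to $X+\widetilde{X}$ as well.

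Next I would split $V_{t,\filG}$ into the conditional-mean piece $\E{\sum_{s=t+1}^T X_s\mid\calG_t}$ and the volatility piece $W_0(\epsilon_1)\sum_{s=t+1}^T\Var\big(\E{\sum_{u=s}^T X_u\mid\calG_s}\mid\calG_{s-1}\big)^{1/2}$. The conditional-mean piece is linear in the cash flow, hence exactly additive. For the volatility piece, linearity of conditional expectation gives $\E{\sum_{u=s}^T(X_u+\widetilde{X}_u)\mid\calG_s}=A_s+\widetilde{A}_s$ with $A_s:=\E{\sum_{u=s}^T X_u\mid\calG_s}$ and $\widetilde{A}_s:=\E{\sum_{u=s}^T\widetilde{X}_u\mid\calG_s}$; then the Minkowski (triangle) inequality for the seminorm $Z\mapsto\Var(Z\mid\calG_{s-1})^{1/2}$, which follows from conditional Cauchy--Schwarz via $\Var(A_s+\widetilde{A}_s\mid\calG_{s-1})=\Var(A_s\mid\calG_{s-1})+2\Cov(A_s,\widetilde{A}_s\mid\calG_{s-1})+\Var(\widetilde{A}_s\mid\calG_{s-1})$, yields $\Var(A_s+\widetilde{A}_s\mid\calG_{s-1})^{1/2}\leq\Var(A_s\mid\calG_{s-1})^{1/2}+\Var(\widetilde{A}_s\mid\calG_{s-1})^{1/2}$ for each $s$. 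Summing over $s=t+1,\dots,T$ gives the same inequality for the whole sum. (All these conditional variances are constants by the Gaussian structure, exactly as used in the proof of Proposition \ref{prop:gaussian_value}, so $\Var(\cdot\mid\calG_{s-1})^{1/2}$ makes sense and the manipulations are just with deterministic numbers.)

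The final step is to multiply the summed volatility inequality by $W_0(\epsilon_1)$ and add the exact identity for the conditional-mean pieces, which gives $V_{t,\filG}(X+\widetilde{X})\leq V_{t,\filG}(X)+V_{t,\filG}(\widetilde{X})$. The step I expect to be the main obstacle is precisely this last multiplication: it preserves the direction of the inequality only because $W_0(\epsilon_1)\geq 0$, and this sign is exactly what distinguishes subadditivity from superadditivity. I would therefore include a short verification that $W_0(\epsilon_1)\geq 0$ under Assumption \ref{assumptionG} --- e.g.\ from the representation $W_0(\epsilon_1)=\int_0^1 F_{\epsilon_1}^{-1}(u)w_0(u)\,du$ with $w_0\geq 0$ in Example \ref{ex:spectral} (using that $\epsilon_1$ is mean zero and $w_0$ is not symmetric), or, in the case $U_0=\E{\cdot}$, directly from the explicit expression and the Mill's ratio bounds in Example \ref{ex:W0normal}. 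With that in hand the argument is complete, and everything else is routine linearity and the conditional triangle inequality.
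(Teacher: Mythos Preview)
Your approach is essentially identical to the paper's: both invoke the explicit formula from Proposition \ref{prop:gaussian_value} and then apply the subadditivity of the conditional standard deviation (the Minkowski/triangle inequality $\Var(Z+\widetilde{Z}\mid\calG)^{1/2}\leq\Var(Z\mid\calG)^{1/2}+\Var(\widetilde{Z}\mid\calG)^{1/2}$) term by term. The paper's proof is terser --- it simply divides the difference $V_{t,\filG}(X+\widetilde{X})-V_{t,\filG}(X)-V_{t,\filG}(\widetilde{X})$ by $W_0(\epsilon_1)$ and observes that each resulting summand is $\leq 0$ --- and it skips your explicit reduction to zero mean.

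You are right that the sign of $W_0(\epsilon_1)$ is the crux, and the paper's proof also uses $W_0(\epsilon_1)>0$ without comment. However, be aware that your proposed verifications do not establish $W_0(\epsilon_1)\geq 0$ in the full generality of Assumption \ref{assumptionG}. For instance, taking $M^R$ uniform on $(0,1)$ (so $R_0(-\epsilon_1)=\E{\epsilon_1}=0$) and $U_0=\E{\cdot}$ gives $W_0(\epsilon_1)=-\tfrac{1}{1+\eta_0}\E{(-\epsilon_1)_+}<0$, in which case the inequality in the proposition would reverse. So $W_0(\epsilon_1)\geq 0$ should be regarded as a standing hypothesis throughout Section \ref{sec:gaussian} (the paper uses it tacitly in Propositions \ref{prop:gaussian_bounds}--\ref{prop:gaussian_subadditivity}) rather than something you can derive from Assumption \ref{assumptionG} alone; your Example \ref{ex:W0normal} argument only covers the ``typical'' regime where $R_0(-\epsilon_1)$ is sufficiently large.
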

\begin{proof}
From Proposition \ref{prop:gaussian_value}, 
\begin{align*}
W_0(\epsilon_1)^{-1}\Big(V_{t,\filG}(X+\widetilde{X})-V_{t,\filG}(X)-V_{t,\filG}(\widetilde{X})\Big)
=\sum_{s={t+1}}^T\Delta_{s,T},
\end{align*}
where 
\begin{align*}
\Delta_{s,T}&=\Var\Big(\E{\sum_{u=s}^{T}X_u\mid\calG_s}+\E{\sum_{u=s}^{T}\widetilde{X}_u\mid\calG_s}\mid\calG_{s-1}\Big)^{1/2}\\
&\quad-\Var\Big(\E{\sum_{u=s}^{T}X_u\mid\calG_s}\mid\calG_{s-1}\Big)^{1/2}-\Var\Big(\E{\sum_{u=s}^{T}\widetilde{X}_u\mid\calG_s}\mid\calG_{s-1}\Big)^{1/2}
\end{align*}
The conclusion now follows from the general fact
\begin{align*}
\Var(Z+\widetilde{Z}\mid\calG)\leq (\Var(Z\mid\calG)^{1/2}+\Var(\widetilde{Z}\mid\calG)^{1/2})^2.
\end{align*}
\end{proof}

%\section{Current practice, examples and computations}

%\subsection{Solvency and the risk margin}\label{sec:solvency}

\section{Valuation of a life-insurance portfolio}\label{life_example}

%The theoretical exposition in the present paper covers a number of results spanning from pure valuation framework results to more model and utility function specific results. We will now go through a simple, but yet realistic life-insurance example.

We will now go through a simple, yet realistic, life-insurance example aimed at illustrating aspects of the cost-of-capital margin.

One of the simplest insurance contracts that is non-trivial and for which it is possible to carry out an exact valuation is a portfolio which at time $0$ consists of $n$ identical and independent term life-insurance contracts.
A term life-insurance contract for a today $x$ year old individual that terminates at latest $T$ years from today is constructed as follows: if the insured individual 
\begin{itemize}
\item dies during year $t+1$ (between times $t$ and $t+1$) for $t\in \{0,\dots,T-1\}$, the amount $1$ is paid to the beneficiary at time $t+1$,
%which happens with probability $q_{x+k}$, an amount of 1 is paid to the beneficiary,
\item is alive after $T$ years, the contract pays nothing.
\end{itemize}
Given the above, at time $0$ there are $N_0 = n$ active contracts of $x$ year olds terminating at time $T$. Moreover, deaths of individuals are assumed to be independent events. Further, if we let $D_{t+1}$ denote the number of deaths during year $t+1$, the dynamics of the number of active contracts at each time can be described as a nested binomial process as follows: $(N_t)_{t=0}^{T}$ is a Markov process with 
$N_{t+1}:=N_t-D_{t+1}$ and $D_{t+1} | N_t \sim\text{Bin}(N_t,q_{x+t})$. Here
\begin{align*}
q_{x+t} := \P(T_x \le t+1 | T_x > t) = 1-\frac{S_x(t+1)}{S_x(t)},\quad t=0,\dots,T-1,
\end{align*}
where $T_x$ is the remaining lifetime of a today $x$ year old, where
\begin{align*}
S_x(u) := \exp\left\{-\int_x^{x+u}\mu_s ds\right\},
\end{align*}
where $\mu_s \geq 0$ is the so-called mortality law or force of mortality. Here we let $\mu_s$ be the Makeham mortality law given by
\begin{align*}
\mu_x := \alpha + \beta\exp\{-\gamma x\},~\alpha,\beta,\gamma > 0.
\end{align*}
In the numerical calculations carried out below we will use $\alpha = 0.001, \beta = 0.000012$ and $\gamma = 0.101314$, corresponding to Swedish mortality table M90 for males. The $q_{x+t}$'s are so-called deferred death probabilities, sometimes denoted by $\phantom{}_{t|1}q_x$.
Appendix \ref{life_calcs} describes how the recursive valuation is formulated for a homogeneous population.

If we instead would assign a value to this liability cash flow using EIOPA's standard procedure, we would let the liability value correspond to the so-called Technical Provisions (TP) given by
\begin{align*}
\TP(D) := \BE_{\mu,1}(D) + \RM(D),
\end{align*}
where, neglecting discounting, 
\begin{align*}
\BE_{\mu,i}(D) := \sum_{j=i}^T \E{D_j;\mu}, \quad i=1,\dots,T,
\end{align*}
where $\E{D_j;\mu}$ denotes expected value of $D_j$ using mortality law $\mu$.
Further,
\begin{align*}
\SCR(D) := \BE_{\widetilde{\mu},1}(D) - \BE_{\mu,1}(D),
\end{align*}
$\widetilde{\mu} = 1.15\mu$ corresponds to the stressed mortality defined by EIOPA, see 
\cite[Article 137]{Commission-del-reg-15},
and the risk margin $\RM$ is finally given by
\begin{align*}
\RM(D) :=\CoC \frac{\SCR(D)}{\BE_{\mu,1}(D)}\sum_{i=1}^T \BE_{\mu,i}(D),
\end{align*}
see \cite[Paragraph 1.114, Method 2]{EIOPA-guidline-tp}. $\CoC$ is the cost-of-capital rate, taken to be $0.06$.
Computations of $\BE_{\widetilde{\mu},1}(D)$ and $\BE_{\mu,i}(D)$ are found in Appendix \ref{life_calcs}.
%In Appendix \ref{life_calcs} it is described how $\mathbb{E}_\mu[D_k]$ can be calculated explicitly, from which all other expressions above follow. 
Figure \ref{fig:exactVsEIOPA}
%\ref{fig:beBin} 
shows $\BE$ as a function of time to contract expiry for a portfolio consisting of $N_0=1000$ 50 year old Swedish males.

\begin{figure}[ht!]
\begin{center}
\includegraphics[width=0.45\textwidth]{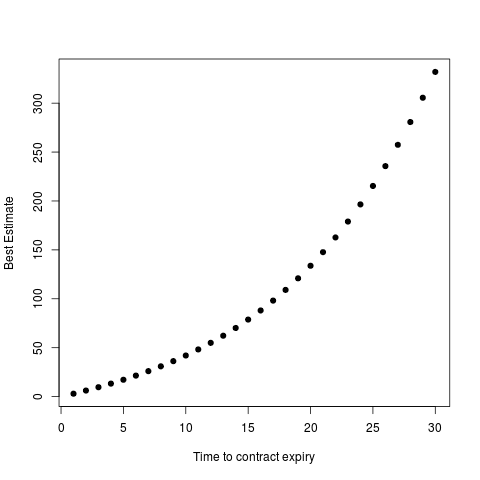}
\includegraphics[width=0.45\textwidth]{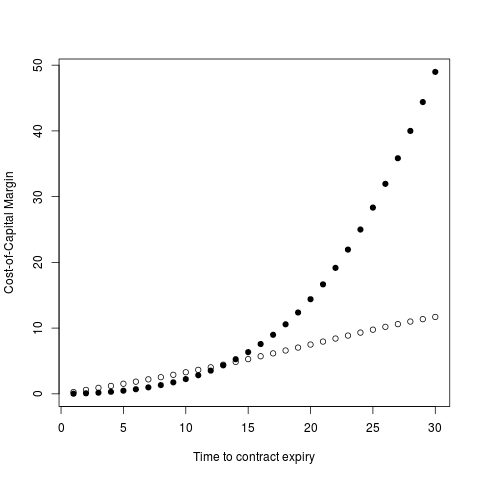}
\end{center}
\caption{The left figure shows best estimate as function of time to contract expiry for a term life-insurance portfolio consisting of $n=1000$ independent $50$ year old Swedish males.
For the same portfolio, the right figure compares the cost-of-capital margins for the nested binomial model (circles) and EIOPA risk margins (solid black discs).}
%\label{fig:beBin}
\label{fig:exactVsEIOPA}
\end{figure}

%\begin{figure}[ht!]
%\begin{center}
%\includegraphics[width=0.5\textwidth]{beBin.png}
%\end{center}
%\caption{Best Estimate as function of time to contract expiry for a term life-insurance portfolio consisting of $n=1000$ independent 50 year old Swedish males.}\label{fig:beBin}
%\end{figure}

%\begin{figure}[ht!]
%\begin{center}
%\includegraphics[width=0.5\textwidth]{exactVsEIOPA.png}
%\end{center}
%\caption{Comparison between exact binomial and EIOPA's risk margin for a term life-insurance portfolio consisting of $n=1000$ independent 50 year old Swedish males. Circles correspond to exact binomial cost-of-capital margin and solid black discs correspond to EIOPA's risk margin.}\label{fig:exactVsEIOPA}
%\end{figure}

Figure \ref{fig:exactVsEIOPA} shows a comparison between the valuation according to our interpretation of EIOPA's standard valuation procedure for a portfolio consisting of $N_0=n=1000$ identical 50 year old Swedish males, and the valuation according to 
\eqref{eq:vy} and \eqref{eq:VtWs_rep_def}. Here 
$R_t$ is set to conditional VaR at the 0.5\% level, seen as a mapping $R_t:L^1(\calF_{t+1})\to L^1(\calF_t)$ with $\filF$ taken to be the filtration generated by $(N_t)$, $U_t:=\E{\cdot \mid \mathcal{F}_t}$, and $\eta_t:=\CoC:=0.06$. Information about computational aspects can be found in 
Appendix \ref{sec:nested_binomial}.

For the sake of comparison we focus on EIOPA's risk margin together with the value of the residual cash flow as defined in Section \ref{sec:val_framework}. From Figure \ref{fig:exactVsEIOPA} it can be seen that in this situation EIOPA's risk margin may underestimate as well as overestimate the risk compared to the above more correct valuation procedure. One can argue that the EIOPA method used here is an approximation, but it does not seem to necessarily be a prudent one. This is unfortunate, since the authors believe that this method is commonly used in the industry, given that the so-called proportionality principle applies. 

%{sec:cocm_bounds}
%{eq:V0upper_bound_alt}

%A correct prudent upper bound is instead given by Proposition \ref{prop:solvency} in Section \ref{sec:solvency}.

%Eventhough the above described portfolio of insurance contracts is simple, the liability cash flow induced by $D= (D_k)$ lack an explicit valuation formula according to the valuation framework of Section [XXX]. It is, however, straightforward, but somewhat involved, to calculate the valuation recursions exactly using simulation. This approach is computationaly feasible up to moderately sized portfolios. \textcolor{red}{Panjer or similar doesn't help us here, right?} In Appendix \ref{life_calcs} a detailed description is given to how this is carried out for the above nested binomial model.

Further, one can note that the recursions defined in Appendix \ref{life_calcs} are expressed for a single homogeneous population. For heterogeneous populations, numerical problems with computation of binomial probabilities may arise. Therefore, it is of interest to analyse how well the Gaussian approximation of Section \ref{sec:gaussian} performs, since this can be readily adapted to handle heterogeneous populations. In order to do so, we want to compare the cash flow generated by $D$ with that of $\E{D} + X$, where $X$ is a zero mean Gaussian vector with the same covariance matrix as $D$, following the setup of Section \ref{sec:gaussian}. Detailed calculations showing how this is done can be found in Appendix \ref{life_calcs}.

\begin{figure}[ht!]
\begin{center}
\includegraphics[width=0.45\textwidth]{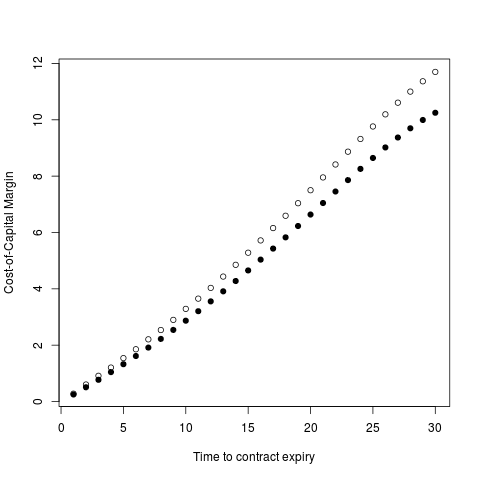}
\includegraphics[width=0.45\textwidth]{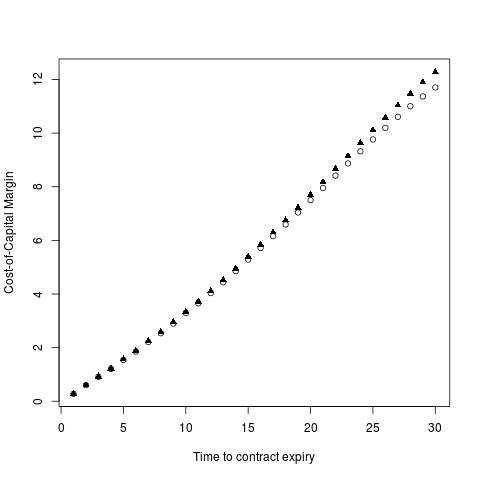}
\end{center}
\caption{For a term life-insurance portfolio consisting of $n=1000$  independent $50$ year old Swedish males, the left figure compares cost-of-capital margins for the nested binomial model (circles) and the Gaussian approximation (solid black discs). The right figure compares cost-of-capital margins for the nested binomial model (circles) and the upper bound \eqref{eq:V0upper_bound_alt} (solid black triangles).}\label{fig:exactVsGauss}
%Comparison between the cost-of-capital margin for the nested binomial model and the corresponding Gaussian approximation for a term life-insurance portfolio consisting of $n=1000$  independent 50 year old Swedish males. Circles correspond to the binomial cost-of-capital margin and solid black discs correspond to the Gaussian approximation.}\label{fig:exactVsGauss}
\end{figure}

%\begin{figure}[ht!]
%\begin{center}
%\includegraphics[width=0.5\textwidth]{binUB20vsExact.png}
%\end{center}
%\caption{Comparison between the cost-of-capital margin for the nested binomial model and the corresponding Gaussian approximation for a term life-insurance portfolio consisting of $n=1000$  independent 50 year old Swedish males. Circles correspond to the binomial cost-of-capital margin and solid black discs correspond to the Gaussian approximation.}\label{fig:binUB20vsExact}
%\end{figure}

In Figure \ref{fig:exactVsGauss} we see that the approximation is performing well for the chosen times to expiry of the insurance contracts, given a portfolio size of $N_0=1000$. For larger portfolio sizes and more complex insurance products the Gaussian approximation will serve as a natural benchmark method.

%To conclude, we have showed that in the current example the valuation framework described in Section [XXX] might give a higher as well as a lower Cost-of-Capital Margin than the corresponding, commonly used, EIOPA method. Wheter or not this relationship holds for other insurance products is hard to say, since the EIOPA method lack solid theoretical foundation, which makes it hard to understand. One can, however, note that the assumptions underlying EIOPA's calibration of its mortality stress relies on that the insured population shall be well-diversified w.r.t.\ a number of risk factors such as age and sex. This assumption is obviously violated in the above example, something which might be a cause of occasional underestimation. Further, we have showed how the Gaussian approximation may be implemented, and that it, not surprisingly, performs well as a proxy for the above nested binomial model.

\appendix

\section{Computational details for Section \ref{life_example}}\label{life_calcs}

\subsection{Recursive valuation for a homogeneous population}\label{sec:nested_binomial}

Let $W_t$ be defined as in Section \ref{sec:val_framework}. The backward recursive valuation for the nested binomial model of Section \ref{life_example} can be expressed explicitly as follows.
%\begin{align*}
%V_t(X):=G_t(N_t),
%\end{align*}
Write $V_t(D):=G_t(N_t)$,
where $N_t$ denotes the number of active contracts at time $t$, and $G_t$ is some deterministic function, presuming that $\calF_t=\sigma(N_1, ..., N_t)$. With this notation, 
\begin{align*}
Y_{t+1}=D_{t+1}+G_{t+1}(N_{t+1})=D_{t+1}+G_{t+1}(N_{t}-D_{t+1}) 
\end{align*}
and we get the following recursion formula for all $n \in \{0,1,...,N_0\}$: $G_T(n)=0$ and, for $t=0,\dots,T-1$,
\begin{align*}
G_t(n)=W_t(D_{t+1}+G_{t+1}(n-D_{t+1})), \quad D_{t+1} \sim \text{Bin}(n,q_t),
\end{align*}
where the dependence on the age of the insured population has been omitted to simplify the notation.
Starting by calculating $G_{T-1}(n)$ for all feasible values of $n$, we can use the recursive formula until we reach $V_0(D)=G_0(N_0)$. 
The computational feasibility relies on the fact that the relevant information at time $t$ is contained in $N_t$ which takes values in the relatively small set $\{0,\dots,N_0\}$. If we consider information which may be expressed as a vector in $\mathbb{N}^k$, $k\geq 2$, direct computation will be considerably more involved, and possibly unfeasible. This would be the case if the population of insured consisted of $k\geq 2$ homogeneous subgroups.

\subsection{Computation of moments in the nested binomial model}

We will now go through how $\E{D_i}$ and $\E{D_iD_j}$ are calculated for the nested binomial model described in Section \ref{life_example}. In order to ease notation we omit explicit references to the age $x$ of the insured population. 
%In order to ease notation we omit the dependence on that the portfolio today consists of only $x$ year old individuals. To this end, note that 
First, note that with $q_k:=\P(T_x\leq k+1\mid T_x>k)$,
\begin{align*}
&N_i \sim \text{Bin}(N_0,\widetilde{p}_i), \quad \widetilde{p}_i:=\prod\limits_{k=0}^{i-1}(1-q_k), \quad i\geq 1,\\
&D_{i+1}\mid N_l \sim \text{Bin}(N_l, q_{i|l}), \quad q_{i|l}:=q_i \prod\limits_{k=l}^{i-1}(1-q_k), \quad i\geq l, l\geq 0. 
\end{align*}
In particular, $\E{D_{i+1}}=q_{i|0}N_0$. Moreover, for $j>i\geq 0$,
\begin{align*}
\E{D_{i+1}D_{j+1}}&=\E{\E{D_{i+1}D_{j+1} \mid N_i}}\\
&= \sum\limits_{n=0}^{N_0}\P(N_i=n)\sum\limits_{x=0}^{n}\sum\limits_{y=0}^{n-x} xy\P(D_{i+1}=x, D_{j+1}=y\mid N_i = n) \\
&=\sum\limits_{n=0}^{N_0}\binom{N_0}{n}\widetilde{p}_i^n(1-\widetilde{p}_i)^{N_0-n}\\
&\quad\times\sum\limits_{x=0}^{n}\sum\limits_{y=0}^{n-x} xy\P(D_{i+1}=x, D_{j+1}=y\mid N_i = n),
\end{align*}
where
\begin{align*}
&\P(D_{i+1}=x, D_{j+1}=y\mid N_i = n)\\
&\quad=\P(D_{i+1}=x \mid N_i = n)\P(D_{j+1}=y \mid D_{i+1}=x, N_i = n)\\
&\quad=\P(D_{i+1}=x \mid N_i = n)\P(D_{j+1}=y \mid N_{i+1 }= n-x) \\
&\quad=\binom{n}{x}q_i^x(1-q_i)^{n-x}\binom{n-x}{y}q_{j|(i+1)}^y(1-q_{j|(i+1)})^{n-x-y}.
\end{align*}
Combining these expressions, we get
\begin{align*}
\E{D_{i+1}D_{j+1}}&=\sum\limits_{n=0}^{N_0} \binom{N_0}{n}\widetilde{p}_i^n(1-\widetilde{p}_i)^{N-n}\\
&\quad\times\sum\limits_{x=0}^{n}\sum\limits_{y=0}^{n-x}xy \binom{n}{x}q_i^x(1-q_i)^{n-x}\binom{n-x}{y}q_{j|(i+1)}^y(1-q_{j|(i+1)})^{n-x-y}
\end{align*}
and similarly
\begin{align*}
\E{D_{i+1}^2}=\sum\limits_{n=0}^{N_0}\binom{N_0}{n}\widetilde{p}_i^n(1-\widetilde{p}_i)^{N_0-n}\sum\limits_{x=0}^{n}x^2\binom{n}{x}q_i^x(1-q_i)^{n-x}.
\end{align*}

\subsection{Cost-of-capital recursion for Gaussian models}
%Exact recursive expression for valuation of Gaussian Model}

Here we derive an explicit recursion formula for the cost-of-capital margin for the Gaussian model $(X, \filG)$ with $\filG=(\calG_t)_{t=0}^T$, $\calG_0=\{\emptyset,\Omega\}$ and $\calG_t=\sigma(X_1,...,X_t)$ for $t=1,\dots,T$. 
A similar, albeit more complicated, formula could be derived for a general Gaussian model, meaning that $\filG$ is larger than the natural filtration of $X$. 
%However, for our purposes it suffices to derive a formula for the natural filtration.

For a multivariate normal vector $Z\sim N_{n}({\mu},\Sigma)$, write 
\begin{align*}
\mu=\left[\begin{array}{c}
\mu_{1:n-1}\\
\mu_n
\end{array}
\right],
\quad
\Sigma=\left[\begin{array}{lr}
\Sigma_{1:n-1,1:n-1} & \Sigma_{1:n-1,n}\\
\Sigma_{n,1:n-1} & \Sigma_{n,n}
\end{array}
\right]
\end{align*}
It is well known that the conditional distribution of $Z_n$ given $Z_1,\dots,Z_{n-1}$ is normal with parameters 
\begin{align*}
\mu_{n\mid 1:n-1}&=\mu_n+\Sigma_{n,1:n-1}\Sigma_{1:n-1,1:n-1}^{-1}(Z_{1:n-1}-\mu_{1:n-1}),\\
\Sigma_{n\mid 1:n-1}&=\Sigma_{n,n}-\Sigma_{n,1:n-1}\Sigma_{1:n-1,1:n-1}^{-1}\Sigma_{1:n-1,n}.
\end{align*}

%With this notation established, we derive the following:

\begin{proposition}\label{prop:gaussianrecursion}
Let $X=(X_t)_{t=1}^{T}\sim N_{T}({0},\Sigma)$ where $\Sigma$ is invertible, let $\filG$ be its natural filtration, and suppose that Assumption \ref{assumptionG} holds.
Then 
\begin{align*}
&V_{t}(X)=\left\{\begin{array}{ll}
0, & t=T,\\
(v^{(t)})^{\trans}X_{1:t} + k_t, & t\in\{1,\dots,T-1\},\\
k_0, & t=0,
\end{array}
\right.
\end{align*}
where $k_{T}:=0$, $v^{(T)}:=0$, $v^{(0)}:=0$,
and, for $t\in\{1,\dots,T-1\}$, $k_t\in\R$ and $v_{t}\in\R^t$ can be calculated recursively from  
\begin{align*}
k_t&:=k_{t+1}+W_t((1+v^{(t+1)}_{t+1})\Sigma_{t+1\mid 1:t}^{1/2}\epsilon_{t+1}),\\
(v^{(t)})^{\trans}&:= (v^{(t+1)}_{1:t})^{\trans}+(1+v^{(t+1)}_{t+1})\Sigma_{t+1:1:t}\Sigma_{1:t,1:t}^{-1}, 
\end{align*}
where $(\epsilon_t)_{t=1}^{T}$ is a sequence of independent standard normally distributed random variables such that $\epsilon_{t+1}$ is $\calG_{t+1}$-measurable and independent of $\calG_t$.
\end{proposition}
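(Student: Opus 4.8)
The statement is the concrete, natural-filtration counterpart of Proposition \ref{prop:gaussian_value}, and the plan is to prove it by the same backward induction, but now keeping track of the affine-in-$X$ representation of $V_t(X)$ rather than of a variance decomposition. The induction runs over $t=T,T-1,\dots,0$ and is driven by the one-step recursion $V_t(X)=W_t(X_{t+1}+V_{t+1}(X))$ recorded right after Definition \ref{def:COCM}. The induction hypothesis at level $t+1$ is exactly the asserted form $V_{t+1}(X)=(v^{(t+1)})^{\trans}X_{1:t+1}+k_{t+1}$, with the convention $v^{(T)}=0\in\R^{T}$, $k_T=0$ encoding the base case $V_T(X)=0$.

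For the induction step, fix $t+1\le T$ and set $Y_{t+1}:=X_{t+1}+V_{t+1}(X)$, so that by the hypothesis
\begin{align*}
Y_{t+1}=\big(v^{(t+1)}_{1:t}\big)^{\trans}X_{1:t}+\big(1+v^{(t+1)}_{t+1}\big)X_{t+1}+k_{t+1}
\end{align*}
is of the form \eqref{eq:gaussianY}. I would then invoke the conditional Gaussian formulas recalled above: the conditional mean $\E{Y_{t+1}\mid\calG_t}=\big((v^{(t+1)}_{1:t})^{\trans}+(1+v^{(t+1)}_{t+1})\Sigma_{t+1,1:t}\Sigma_{1:t,1:t}^{-1}\big)X_{1:t}+k_{t+1}$ is $\calG_t$-measurable and affine in $X_{1:t}$, while --- the crucial Gaussian feature --- the conditional variance $\Var(Y_{t+1}\mid\calG_t)=(1+v^{(t+1)}_{t+1})^{2}\Sigma_{t+1\mid 1:t}$ is a constant. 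Consequently $Y_{t+1}=\E{Y_{t+1}\mid\calG_t}+(1+v^{(t+1)}_{t+1})\Sigma_{t+1\mid 1:t}^{1/2}\epsilon_{t+1}$ with $\epsilon_{t+1}$ standard normal, $\calG_{t+1}$-measurable and independent of $\calG_t$, which is precisely the decomposition $Y^{(1)}+Y^{(2)}Y^{(3)}$ required by Proposition \ref{prop:quantileRandUandW}(ii) (absorbing the sign of $1+v^{(t+1)}_{t+1}$ into $Y^{(3)}$ if needed). Applying that proposition, and using that $W_t$ of a scalar multiple of $\epsilon_{t+1}$ is a constant because $\eta_t=\eta_0\in L^0(\calG_0)$, yields $V_t(X)=W_t(Y_{t+1})=\E{Y_{t+1}\mid\calG_t}+W_t\big((1+v^{(t+1)}_{t+1})\Sigma_{t+1\mid 1:t}^{1/2}\epsilon_{t+1}\big)$. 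Reading off the coefficient vector of $X_{1:t}$ and the constant term then gives exactly the asserted recursions for $v^{(t)}$ and $k_t$, and $V_t(X)\in L^p$ follows from Proposition \ref{prop:quantileRandUandW}(i) under Assumption \ref{assumptionG}. The endpoints $t=T-1$ (where $1+v^{(T)}_T=1$) and $t=0$ (where $X_{1:0}$ is the empty vector and $\calG_0$ is trivial, so $V_0(X)=k_0$ and $v^{(0)}=0$) are simply the boundary instances of this single computation.

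The only genuinely delicate point, and the one I would expect to be the main --- though essentially bookkeeping --- obstacle, is checking that the affine-in-$X$ class is closed under one step of the recursion in the precise sense needed to apply Proposition \ref{prop:quantileRandUandW}(ii) at every level: this rests entirely on the fact that conditional variances of jointly Gaussian variables are deterministic, which is what allows $Y_{t+1}$ to be written as a $\calG_t$-measurable part plus a fixed multiple of an independent standard normal. One also uses the invertibility of $\Sigma$ to ensure that every leading principal block $\Sigma_{1:t,1:t}$ is invertible and that the conditional mean/variance formulas are meaningful (with $\Sigma_{t+1\mid 1:t}>0$) at each step. Everything else reduces to the translation invariance of $W_t$ and elementary linear algebra.
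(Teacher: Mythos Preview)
Your proposal is correct and follows essentially the same backward-induction argument as the paper: write $Y_{t+1}$ as an affine function of $X_{1:t+1}$, use the conditional Gaussian mean/variance formulas together with Proposition~\ref{prop:quantileRandUandW}(ii) and translation invariance of $W_t$ to split off the $\calG_t$-measurable part, and read off the recursions for $v^{(t)}$ and $k_t$. Your remark on absorbing the sign of $1+v^{(t+1)}_{t+1}$ corresponds to the paper's observation (stated just after the proof) that $W_t(K\epsilon_{t+1})=|K|W_0(\epsilon_1)$.
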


\begin{proof}[Proof of Proposition \ref{prop:gaussianrecursion}]
We know that $V_{T}(X)=0$ and we set $v^{(T)}:=0$. 
We prove the statement via induction. Take $t\in\{0,\dots,T-1\}$ and suppose that $V_{t+1}(X)=(v^{(t+1)})^{\trans}X_{1:t+1}+k_{t+1}$. 
First, consider the case $t\geq 1$. Then 
\begin{align*}
Y_{t+1}=(1+v^{(t+1)}_{t+1})X_{t+1}+(v^{(t+1)}_{1:t})^{\trans}X_{1:t}+k_{t+1}.
\end{align*}
Since the latter two terms are $\calG_t$-measurable, translation invariance of $W_t$ combined with properties of the conditional Gaussian distribution give
\begin{align*}
V_t(X)&=W_t(Y_{t+1})\\
&=W_t((1+v^{(t+1)}_{t+1})X_{t+1})+(v^{(t+1)}_{1:t})^{\trans}X_{1:t}+k_{t+1}\\
&=W_t((1+v^{(t+1)}_{t+1})(\mu_{t+1\mid 1:t}+\Sigma_{t+1\mid 1:t}^{1/2}\epsilon_{t+1}))+(v^{(t+1)}_{1:t})^{\trans}X_{1:t}+k_{t+1}\\
&=W_t((1+v^{(t+1)}_{t+1})\Sigma_{t+1\mid 1:t}^{1/2}\epsilon_{t+1})+(1+v^{(t+1)}_{t+1})\mu_{t+1\mid 1:t}+(v^{(t+1)}_{1:t})^{\trans}X_{1:t}+k_{t+1}\\
&=\Big((v^{(t+1)}_{1:t})^{\trans}+(1+v^{(t+1)}_{t+1})\Sigma_{t+1,1:t}\Sigma_{1:t,1:t}^{-1}\Big)^{\trans}X_{1:t}\\
&\quad+k_{t+1}+W_t((1+v^{(t+1)}_{t+1})\Sigma_{t+1\mid 1:t}^{1/2}\epsilon_{t+1})\\
&=(v^{(t)})^{\trans}X_{1:t}+k_t
\end{align*}
from which the conclusion follows. Finally, consider the case $t=0$. Then 
\begin{align*}
V_0(X)&=W_0(Y_{1})\\
&=W_0((1+v^{(1)}_{1})X_{1})+k_{1}\\
&=k_1+(1+v^{(1)}_{1})\mu_1+W_0((1+v^{(1)}_{1})\Sigma_{1,1}^{1/2}\epsilon_{1})\\
&=k_0.
\end{align*}
\end{proof}

Notice that $W_t(K \epsilon_{t+1})=|K|W_0(\epsilon_{1})$ due to symmetry of the standard normal distribution and Assumption \ref{assumptionG}. 

\end{document}